\newtheorem{fact}{Fact}
\newcommand{\para}[1]{\smallskip \noindent {\bf #1}}
\newcommand{\parit}[1]{\smallskip \noindent {\em #1}}
\newcommand{\eat}[1]{}
\newcommand{\eeps}{\ensuremath{e^{\epsilon}}}
\newcommand{\var}{\mathrm{Var}} 
\newcommand{\cov}{\mathrm{Cov}}
\DeclareMathOperator{\E}{\mathrm{\mathbb{E}}}
\newcommand{\frack}[2]{#1/#2}
\author{Graham Cormode}
\affiliation{%
  \institution{University Of Warwick, UK}
  } %
    \email{g.cormode@warwick.ac.uk}
 \author{Tejas Kulkarni}
 \affiliation{%
  \institution{University Of Warwick, UK}
}
   \email{tejasvijaykulkarni@Gmail.com}
\author{Divesh Srivastava}
\affiliation{%
  \institution{AT\&T Labs-Research, USA}
  }
  \email{divesh@research.att.com}
\begin{document}
\title{Answering Range Queries Under Local Differential Privacy} % TODO: replace with your title

\begin{abstract}
Counting the fraction of a population having an input within a
specified interval i.e. a \emph{range query}, is a fundamental data
analysis primitive.
Range queries can also be used to compute other interesting statistics
such as \emph{quantiles}, and to build prediction models.
However, frequently the data is subject to privacy concerns when it
is drawn from individuals, and relates for example to their financial, health,
religious or political status. 
In this paper, we introduce and analyze methods to support range
queries under the local variant of differential
privacy~\cite{SLNRS:2011}, an emerging standard for privacy-preserving
data analysis.

%Often privacy concerns become a major obstacle in availability of such
%statistics. E.g. an pollster may be interested in finding out the
%number of millionaires in a group but may not be able to conduct this
%study because the participants are unwilling to share their true net
%worth with the pollster.

The local model requires that each user releases a noisy view of her
private data under a privacy guarantee. 
While many works address the problem of range queries in the
trusted aggregator setting, this problem has not been
addressed specifically under untrusted aggregation (local DP) model
even though many primitives have been developed recently for 
estimating a discrete distribution.
We describe and analyze two classes of approaches for range queries,
based on hierarchical histograms and the Haar wavelet transform.
We show that both have strong theoretical accuracy guarantees on
variance.
In practice, both methods are fast and require minimal computation and
communication resources.
Our experiments show that the wavelet approach is most accurate in high privacy settings, while
the hierarchical approach dominates for weaker privacy requirements. 
\end{abstract}

% TODO: replace this section with code generated by the tool at https://dl.acm.org/ccs.cfm

%\ccsdesc{Security and privacy~Use https://dl.acm.org/ccs.cfm to generate actual concepts section for your paper}
% -- end of section to replace with generated code

%\keywords{template; formatting; pickling} % TODO: replace with your keywords

\maketitle

\section{Introduction}

All data analysis fundamentally depends on a basic understanding of how the data is
distributed.
Many sophisticated data analysis and machine learning techniques are
built on top of primitives that describe where data points are
located, or what is the data density in a given region.
That is, we need to provide accurate answers to estimates of the data
density at a given point or within a range.
Consequently, we need to ensure that such queries can be answered
accurately under a variety of data access models.

This remains the case when the data is sensitive, comprised of the
personal details of many individuals.
Here, we still need to answer range queries accurately, but also meet
high standards of privacy, typically by ensuring that answers are
subject to sufficient bounded perturbations that each individual's
data is protected.
In this work, we adopt the recently popular model of Local
Differential Privacy (LDP).
Under LDP, individuals retain control of their own private data,
by revealing only randomized transformations of their input.
Aggregating the reports of sufficiently many users gives accurate
answers, and allows complex analysis and models to be built, while
preserving each individual's privacy.

LDP has risen to prominence in recent years due to its adoption and
widespread deployment by major technology companies, including Google~\cite{rappor:15},
Apple~\cite{applewhitepaper:17} and Microsoft~\cite{telemetry:17}.
These applications rely at their heart on allowing frequency
estimation within large data domains (e.g. the space of all words, or
of all URLs).
Consequently, strong locally private solutions are known for this point estimation
problem.
It is therefore surprising to us that no prior work has explicitly
addressed the question of {\em range queries} under LDP.
Range queries are perhaps of wider application than point queries,
from their inherent utility to describe data, through their immediate
uses to address cumulative distribution and quantile queries, up to
their ability to instantiate classification and regression models for
description and prediction.

In this paper, we tackle the question of how to define protocols to
answer range queries under strict LDP guarantees.
Our main focus throughout is on one-dimensional discrete domains,
which provides substantial technical challenges under the strict model
of LDP.
These ideas naturally adapt to multiple dimensions, as we
discuss briefly as an extension. 
%Though 1-dimensional range queries covers relatively limited use cases, answering them more accurately can be sufficiently challenging under the strict model of LDP. Therefore, this problem can be treated as a vehicle for  gaining further understanding in the more interesting case of multi dimensional queries. 
 A first approach to answer range queries is to simply pose each
point query that constitutes the range.
This works tolerably well for short ranges over small domains, but
rapidly degenerates for larger inputs.
Instead, we adapt ideas from computational geometry, and show how
hierarchical and wavelet decompositions can be used to reduce the
error.
This approach is suggested by prior work in the centralized privacy
model, but we find some important differences, and reach different
conclusions about the optimal way to include data and set parameters
in the local model.
In particular, we see that approaches based on hierarchical
decomposition and wavelet transformations are both effective and offer
similar accuracy for this problem; whereas, naive approaches that
directly evaluate range queries via point estimates are inaccurate and
frequently unwieldy.  

\para{Our contributions.}
In more detail, our contributions are as follows:
We provide background on the model of Local Differential Privacy (LDP)
and
related efforts for range queries in Section~\ref{sec:prior}.
Then in Section~\ref{sec:prelims}, we summarize the approaches to
answering point queries under LDP, which are a building block for our
approaches. 
Our core conceptual contribution (Section~\ref{sec:range}) comes from proposing and analyzing
several different approaches to answering one-dimensional range
queries. 

%\begin{itemize}[noitemsep]
%\item
\para{$\bullet$}
We first formalize the problem and show that the simple approach of
summing a sequence of point queries entails error (measured as
variance) that grows linearly with the length of the range (Section~\ref{sec:flat}).  
%  \item

\para{$\bullet$}
In Section~\ref{sec:hh}, we consider hierarchical approaches,
generalizing the idea of a binary tree.
We show that the variance grows only logarithmically with the length
of the range.
Post-processing of the noisy observations can remove inconsistencies,
and reduces the constants in the variance, allowing an optimal
braching factor for the tree to be determined.

%\item
\para{$\bullet$}
  The last approach is based on the Discrete Haar wavelet transform
  (DHT, described in Section~\ref{sec:dht}).
  Here the variance is bounded in terms of the logarithm of the domain
  size, and no post-processing is needed.
  The variance bound is similar but not directly comparable to that in the
  hierarchical approach. 
%\end{itemize} 

\medskip
Once we have a general method to answer range queries, we can apply it
to the special case of prefix queries, and to find order statistics
(medians and quantiles).
We perform an empirical comparison of our methods in
Section~\ref{sec:expts}.
Our conclusion is that both the hierarchical and DHT approach are
effective for domains of moderate size and upwards.
The accuracy is very good when there is a large population of users
contributing their (noisy) data.
Further, the related costs (computational resources required by each
user and the data aggregator, and the amount of information sent by
each user) are very low for these methods, making them practical to
deploy at scale. We show that the wavelet approach is most accurate in high privacy settings, while the hierarchical approach dominates for weaker privacy requirements.
We conclude by considering extensions of our scenario, such as
multidimensional data (Section~\ref{sec:conclusions}).

% data analysis via density estimation, range queries (quantiles as
% canonical example)
% privacy motivate
% recent emphasis on LDP, brief survey,
% our contributions

\section{Related Work}
\label{sec:prior}

\para{Range queries.}
Primitives to support range queries are necessary in a variety of data
processing scenarios. 
Exact range queries can be answered by simply scanning the data and
counting the number of tuples that fall within the range; faster
answers are possible by pre-processing, such as sorting the data (for
one-dimensional ranges).
Multi-dimensional range queries are addressed by geometric data
structures such as $k$-$d$ trees or quadtrees~\cite{Samet:05}.
As the dimension increases, these methods suffer from the ``curse of
dimensionality'', and it is usually faster to simply scan the data.

Various approaches exist to approximately answer range queries.
A random sample of the data allows the answer on the sample to be
extrapolated; to give an answer with an additive $\epsilon$ guarantee
requires a sample of size $O(\frac{1}{\epsilon^{2}})$~\cite{Cormode:Garofalakis:Haas:Jermaine:12}.
Other data structures, based on histograms or streaming data sketches
can answer one-dimensional range queries with the same accuracy
guarantee and with a space cost of
$O(1/\epsilon)$~\cite{Cormode:Garofalakis:Haas:Jermaine:12}.
However, these methods do not naturally translate to the private
setting, since they retain information about a subset of the input
tuples exactly, which tends to conflict with formal statistical
privacy guarantees. 

\para{Local Differential Privacy (LDP).}
The model of local differential privacy has risen in popularity in
recent years in theory and in practice as a special case of
differential privacy.
It has long been observed that local data perturbation methods,
epitomized by Randomized Response~\cite{Warner:65} also meet the
definition of Differential Privacy~\cite{Dwork:Roth:14}.
However, in the last few years, the model of local data perturbation
has risen in prominence: initially from a theoretical
interest~\cite{d:13}, but subsequently from a practical
perspective~\cite{rappor:15}.
A substantial amount of effort has been put into the question of
collecting simple popularity statistics, by scaling randomized
response to handle a larger domain of
possibilities~\cite{applewhitepaper:17,telemetry:17,Bassily:Nissim:Stemmer:Thakurta:17,Wang:Blocki:Li:Jha:17}.
The current state of the art solutions involve a combination of ideas
from data transformation, sketching and hash projections to reduce the
communication cost for each user, and computational effort for the
data aggregator to put the information
together~\cite{Bassily:Nissim:Stemmer:Thakurta:17,Wang:Blocki:Li:Jha:17}.

Building on this, there has been substantial effort to solve a variety
of problems in the local model, including:
language modeling and text prediction~\cite{Chang:Thakurta:18};
higher order and marginal statistics~\cite{calm:2018,rappor2:16,Cormode:Kulkarni:Srivastava:18}; 
social network and graph modeling~\cite{GLCZ:2018,Qin:Yu:Yang:Khalil:Xiao:Ren:17};
and various machine learning, recommendation and model building tasks~\cite{WXYZHSSY:2019,d:13,Xiao:16,Zheng:Mou:Wang:17,SKSX:2018}
However, among this collection of work, we are not aware of any work
that directly or indirectly addresses the question of allowing range
queries to be answered in the strict local model, where no interaction
is allowed between users and aggregator.

\para{Private Range queries.}
In the centralized model of privacy, there has been more extensive
consideration of range queries.
Part of our contribution is to show how several of these ideas can
be translated to the local model, then to provide customized analysis
for the resulting algorithms.
Much early work on differentially private histograms considered range
queries as a natural target~\cite{Dwork:McSherry:Nissim:Smith:06,Hardt:Ligett:McSherry:12}. 
However, simply summing up histogram entries leads to large errors for
long range queries. 

Xiao {\em et al.}~\cite{dht:11} considered adding
noise in the Haar wavelet domain, while 
Hay {\em et al.}~\cite{HRMS:10} formalized the approach of keeping a
hierarchical representation of data.
Both approaches promise error that scales only logarithmically with
the length of the range. 
These results were refined by
Qardaji {\em et al.}~\cite{QYL:13}, who compared the two approaches
and optimized parameter settings.
The conclusion there was that a hierarchical approach with moderate
fan-out (of 16) was preferable, more than halving the error from the
Haar approach.
A parallel line of work considered two-dimensional range queries, introducing the
notion of private spatial decompositions based on $k$-$d$ trees and
quadtrees~\cite{Cormode:Procopiuc:Srivastava:Shen:Yu:12}.
Subsequent work argued that shallow hierarchical structures were
often preferable, with only a few levels of refinement~\cite{Qardaji:Yang:Li:12}.

\section{Model And Preliminaries}
\label{sec:prelims}
\subsection{Local Differential Privacy}
\label{sec:ldp}
Initial work on differential privacy assumed the presence of a
\emph{trusted aggregator}, who curates all the private information of
individuals, and releases information through a perturbation
algorithm. In practice, individuals may be reluctant to share private
information with a data aggregator. The \textit{local} variant of
differential privacy instead captures the case when each user $i$ only
has their local view of the dataset $S$ (typically, they only know
their own data point $z_i$) and she independently releases information
about her input through an instance of a DP algorithm.
This model has received widespread industrial adoption, including by
Google \cite{rappor:15,rappor2:16}, Apple \cite{applewhitepaper:17},
Microsoft \cite{telemetry:17} and Snap \cite{snap:18} for tasks like
heavy hitter identification (e.g., most used emojis), training word
prediction models, anomaly detection, and measuring app usage.

%Now we introduce this model more formally.
In the simplest setting, we assume each participant $i \in [N]$ has an
input $z_i$ drawn from some global discrete or continuous distribution
$\theta$ over a domain $\mathcal{Z}$.
% It is more convenient to define $z_i$ as an identity basis vector of length $l=|X| \times |Y|$. This representation is also called as the \textit{one hot encoding} and captures the \textit{correlation} between pairs $x_i$ and $y_i$'s. Any two tuples $z_i$ and $z_i'$ are considered adjacent, with $||z_i-z_{i'}||_1=2$.
We do not assume that users share any trust relationship with each
other, and so do not communicate amongst themselves. 
Implicitly, there is also an (untrusted) aggregator interested in estimating some statistics over the private dataset $\{z_i\}_{i=1}^N$. 
% Now we will formally define LDP and then describe some baseline solutions for computing few instances of U-statistics. 

\para{Formal definition of Local Differential Privacy (LDP) \cite{SLNRS:2011}.}
A randomized function $F$ is $\epsilon$-locally differentially private if for all possible pairs of  $z_i,z_i'\sim Z$ and for every possible output tuple $O$ in the range of $F$:
\[
\Pr[F(z_i)=O] \leq e^{\epsilon} \Pr[F(z_{i}')=O].
\]
This is a local instantiation of differential
privacy~\cite{Dwork:Roth:14}, where the perturbation mechanism $F$ is applied to
each data point independently.
In contrast to the centralized model, perturbation under LDP happens at the user's end.

\subsection{Point Queries and Frequency Oracles}
\label{sec:hist}

A basic question in the LDP model is to answer {\em point queries} on
the distribution: to estimate the frequency of any given element $z$
from the domain $\mathcal{Z}$. 
Answering such queries form the underpinning for a variety of
applications such as population surveys, machine learning, spatial
analysis and, as we shall see, our objective of quantiles and range
queries. 

% Recently the question of reconstructing a histogram or a discrete distribution has been a subject of investigation for a large chunk of literature due to its applications in many domains such as spatial analysis, population surveys and machine learning.

In the point query problem, each $i$ holds a private item $z_{i}$
drawn from a public set $\mathcal{Z}, |\mathcal{Z}|=D$ using an
unknown common discrete distribution $\theta$.
That is, $\theta_z$ is the probability that a randomly sampled input
element is equal to $z \in \mathcal{Z}$. 
The goal is to provide a protocol in the LDP model (i.e. steps that
each user and the aggregator should follow) so the aggregator can
estimate $\theta$ as $\widehat{\theta}$ as accurately as possible. 
Solutions for this problem are referred to as providing a {\em
  frequency oracle}. 

Several variant constructions of frequency oracles have been described in
recent years. 
%We now summarize few recently developed mechanisms for histogram evaluation.
%Most of these algorithms represent the input $z_i \in [d]$ in
%\emph{one hot encoding}\footnote{In one-hot encoding, an input $z_i
%  \in [d]$ is represented as a vector $v_i\in \{0,1\}^{d}$ where
%  $v_i[z_i]=1$ and $v_i[j]=0,\forall j \neq z_i$} for $v_i \in
%\{1,0\}^{d}$.
%These algorithms have a similar template.
%These algorithms follo
In each case, 
the users perturb their input locally via tools such as linear
transformation and random sampling, and send the result to the
aggregator.
These noisy reports are aggregated and an appropriate bias correction
is applied to them to reconstruct the frequency for each item in $\mathcal{Z}$.
The error in estimation is generally quantified by the \emph{mean
  squared error}.
We know that the mean squared error can be decomposed into
\emph{(squared) bias} and \emph{variance}.
Often estimators for these mechanisms are \emph{unbiased} and have the
same variance $V_F$ for all items in the input domain.
Hence, the variance can be used interchangably with squared error,
after scaling. 
%in estimation is captured directly via the squared error.
The mechanisms vary based on their computation and communication
costs, and the accuracy (variance) obtained.
In most cases, the variance is proportional to 
$\frac{1}{N(e^\epsilon-1)^{2}}$.
%In what follows, we also state the leading constants, since these have an appreciable
%impact on accuracy. 

%As we will see, each of these mechanisms can only optimize on one or
%more metrics  out of accuracy, computation and communication but not
%all of them.
%All mechanisms except GRR, have variance per item of the order $O(\frac{1}{N\epsilon^{2}})$ but differ in constants. 
%\subsection{Primitives For Histogram Estimation}
%\label{app:primitives}	 

\para{Optimized Unary Encoding (OUE).}
A classical approach to releasing a single bit of data with a
privacy guarantee is Randomized Response (RR), due to Wagner
\cite{Warner:65}.
Here, we either report the true value of the input or its complement
with appropriately chosen probabilities. 
To generalize to inputs from larger domains,  we represent $v_i$
as the sparse binary vector $e_{v_i}$ (where
$e_j[j]=1$ and 0 elsewhere), and randomly flip each bit of $e_{v_i}$ to
obtain the (non-sparse) binary vector $o_i$.
Naively, this corresponds to applying  one-bit randomized response~\cite{Warner:65} to
each bit independently.
Wang et al.\ \cite{Wang:Blocki:Li:Jha:17} proposed a variant of this
scheme that reduces the variance for larger $D$. 
%classic 1 bit randomized response.
%which is a special case of GRR when $d=2$.
%\begin{itemize}
%{\bf need to change probabilities and describe OUE instead.}

\parit {Perturbation:} Each user $i$ flips each bit at each location $j
\in [D]$ of $e_i$ using the following distribution.
\begin{align*}
\Pr[o_i[j]=1]=
\begin{cases}
\frac{1}{2},      \text{ if     } e_i[j]=1\\
\frac{1}  {1+e^{\epsilon}},    \text{ if      } e_i[j]=0\\
\end{cases}
\end{align*}
%The use of two probabilities minimize the variance.
Finally user $i$ sends the perturbed input $o_i$ to the aggregator. 	

% With probability
%$p=\frac{e^{\epsilon/2}}{e^{\epsilon/2}+1}$.
%Otherwise, she retains the bit as it is.

\parit {Aggregation:}
$\widehat{\theta}[z] = 
%\widehat{v}[j]=
\left(\frac{\sum_{i=1}^{N} o_i[z]}{N}+\frac{N}{e^{\epsilon}+1}\right)\big/\left({\frac{1}{2}-\frac{1}{e^{\epsilon}+1}}\right)$

\parit {Variance:} $V_F=\frac{4e^{\epsilon}}{N(e^{\epsilon}-1)^{2}}$
%\end{itemize}

OUE
%achieves the best accuracy of the mechanisms we describe, regardless of the value of $D$.
%However, it
does not scale well to very large $D$ due to large
communication complexity (i.e., $D$ bits from each user), and the
consequent computation cost for the user ($O(D)$ time to flip the
bits).
Subsequent mechanisms have smaller communication cost than OUE.

\eat{\para{Optimized Unary Signed Encoding (OUSE).}
Describe the signed version of OUE for Haar inputs $\{-1,0,+1\}$ .
Or a real-valued version that applies to any value in the range
$[-1,1]$. 
}
\eat{
\para{Generalized Randomized Response (GRR) \cite{KBR:16}}.
For small domains, it makes sense to try to report the user's input
directly, with a false report chosen with appropriate probability to
ensure privacy. 

\parit{{Perturbation:}}
With probability $p=\frac{e^{\epsilon}}{e^{\epsilon}+D-1}$, we report the true
$z_{i}$.
Otherwise, we report $j \neq z_i$ sampled u.a.r. (uniformly at random)
so that each value is chosen
with probability $\frac{1}{e^\epsilon + D - 1}$. 

\parit{{Aggregation:}} $\widehat{\theta}[z]=\frac{(D-1)\frac{\sum_{i=1}^{N}
    I_{r_{i}=z}}{N}+p-1}{pD-1}$.

\parit{{Variance:}} $V_F=\frac{D-2+e^{\epsilon}}{N(e^{\epsilon}-1)^{2}}$.

Since the value of $p$ reduces with $D$, GRR has a large variance for
items from  large domains.
Nevertheless, it is a preferred mechanism when domain size $D$ is small.
 \cite{Wang:Blocki:Li:Jha:17} showed that GRR achieves the least variance when $D < 3e^{\epsilon}+2$.
}

\para{Optimal Local Hashing (OLH) \cite{Wang:Blocki:Li:Jha:17}.}
The OLH method aims to reduce the impact of dimensionality on
accuracy by employing  \textit{universal hash functions}\footnote{A
  family of hash functions $\mathbb{H}=\{H:[D] \rightarrow [g]\}$ is
  said to be universal  if $\forall z_i,z_j \in [D] , z_i \neq z_j : \Pr_{H
    \in \mathbb{H}}[H(z_i) = H(z_j)] \leq \frac{1}{g}$ i.e. collision
  probability behaves uniformly.}.
More specifically, each user samples a hash function $H:  [D]
\rightarrow [g]$ ($g \ll D$) u.a.r from a universal family $\mathbb{H}$ and perturbs the hashed input.

\parit{Perturbation:} User $i$ samples a $H_i \in \mathbb{H}$ u.a.r and
computes $h_i=H_i(v_i)$.
User $i$ then perturbs $h_i \in [g]$ using a version of RR generalized
for categorical inputs~\cite{KBR:16}.
Specifically, each user reports $H_i$ and, 
%along with the $H_i$ reports to the aggregator
%the true $h_i$
with probability $p=\frac{e^{\epsilon}}{e^{\epsilon}+g-1}$
gives the true $h_i$, else 
%, and with probability $1-p$,
she reports a value sampled u.a.r from $[g]$. 

\parit{Aggregation:} The aggregator collects the perturbed hash values from all users.
For each hash value $h_i$, the aggregator computes a frequency vector
for all items in the original domain, based on which items would produce the hash
value $h_i$ under $H_i$.
All $N$ such histograms are added together to give $T \in \mathbb{R}^{D}$ and an unbiased estimator for each frequency for all elements in the original domain is given by the correction $\widehat{\theta}[j]=\frac{T[j]-\frac{N}{g}}{p-\frac{1}{g}}$.  

\parit{Variance:}
Setting $g=e^{\epsilon}+1$ minimizes the variance to be
$V_F =\frac{4p(1-p)}{N(2p-1)^{2}}=\frac{4e^{\epsilon}}{N(e^{\epsilon}-1)^{2}}$. 
OLH has the same variance as OUE and it more economical on
communication.
However, a major downside is that it is compute intensive in terms of
the decoding time at the aggregator's side, which is prohibitive for
very large dimensions (say, for $D$ above tens of thousands), since the time cost is proportional to
$O(ND)$. 

\begin{figure}
\centering
  \centering
  \scriptsize
\[\frac{1}{\sqrt{8}}\begin{pmatrix*}[r]
 1 &  1 &  1 &  1 &  1 &  1 &  1 &  1 \\  
 1 & -1 &  1 & -1 &  1 & -1 &  1 & -1 \\
 1 &  1 & -1 & -1 &  1 &  1 & -1 & -1 \\
 1 & -1 & -1 &  1 &  1 & -1 & -1 &  1 \\
 1 &  1 &  1 &  1 & -1 & -1 & -1 & -1 \\
 1 & -1 &  1 & -1 & -1 &  1 & -1 &  1 \\
 1 &  1 & -1 & -1 & -1 & -1 &  1 &  1 \\
 1 & -1 & -1 & -1 & -1 &  1 &  1 & -1 
 \end{pmatrix*}\]

\caption{Hadamard Transform Matrix for $D=8$.}
\label{fig:hadamard}
\end{figure}

\para{Hadamard Randomized Response (HRR) \cite{Cormode:Kulkarni:Srivastava:18,Xiao:16}.}
The Discrete Fourier (Hadamard) transform is described by an orthogonal, symmetric matrix $\phi$ of dimension $D \times D$ (where $D$ is a power of 2).
Each entry in $\phi$ is 

\centerline{$\textstyle
\phi[i][j]=\frac{1}{\sqrt{D}}(-1)^{\langle i,j\rangle},$}
where $\langle i,j\rangle$ is the number of $1$'s that $i$ and $j$ agree on
in their binary representation.
The (full) Hadamard transformation (HT) of user's input $v_i$ is the $v_i$th
column of $\phi$ i.e. $\phi
\times e_i$.
For convenience, the user can scale $\phi$ up by $\sqrt{D}$ to give values either
$-1$ or $1$.
Figure~\ref{fig:hadamard} shows an example of the Hadamard transformation matrix.
%\begin{figure}
%\centering
%\includegraphics[scale=0.4]{ht.png}

%\caption{The Hadamard Transform matrix for $l=8$.}
%\label{fig:ht}
%\end{figure}

\parit{Perturbation:} User $i$
%takes HT of $v_i$, which is the $v_i$th column of $\phi$ and
samples an index $j\in [D]$ u.a.r and
perturbs $\phi[v_i][j]\in\{-1,1\}$ with binary randomized response,
keeping the value with probability $p$, and flipping it with
probability $1-p$.
Finally user $i$ releases the perturbed coefficient $o_j$
and $j$.
%to the aggregator. 

\parit{Aggregation:}
Consider each report from each user.
With probability $p$, the report is the true value of the coefficient;
with probability $1-p$, we receive its negation.
Hence, we should divide the reported value by $2p-1$ to obtain an
unbiased random variable whose expectation is the correct value. 
The aggregator can then compute the observed sum of each perturbed 
coefficient $j$ as $O_j$.
An unbiased estimation of the $j$th Hadamard coefficient
$\widehat{c_j}$ (with the $\frac{1}{\sqrt{D}}$ factor restored) is
given by 
$\widehat{c_j} 
%(1-2\widehat{O}_j) =
%\big(1-2\big(\frac{\E[O_j]-2p+1}{2\sqrt{D}(1-2p)}\big)\big)
=\frac{O_j}{\sqrt{D}(2p-1)}$. 
%Let $f_j$ be the true frequency and $O_j$ be the observed sum of
%perturbed Hadamard coefficient $v[j]$.
Therefore, the aggregator can compute an unbiased estimator for each
coefficient, and then apply the inverse
transform to produce $\widehat{\theta}$.
%(alternatively, any individual frequency can be estimated)
%Therefore, aggregator collects all perturbed coefficients from each
%user in an array $T \in  \mathbb{R}^{D}$ and maintains another array
%$C \in \mathbb{R}^{D}$ to maintain the frequency of each coefficient
%index.
%An unbiased estimator for the frequency of each Hadamard coefficient at each location $j \in [D]$ is $\widehat{T}[j]=\widehat{v}[j]=\frac{\sum_{i=1}^{N}T[j]}{NC[j](2p-1)}$. User then brings reconstructed frequencies into identity basis by inverting the Hadamard transform i.e. by performing
%$\phi^{-1} \widehat{T}=\phi \widehat{T}$.

\parit{Variance:} The variance of each user report is given by the squared error of our unbiased estimator.
With probability $p$, the squared error is $(1 - \frac{1}{2p-1})^2/D$,
else the squared error is $(1+\frac{1}{2p-1})^2/D$.
Then, we can expand the variance for each report as
\begin{align*}
\textstyle
  \frac{p(2p-2)^2 + (1-p)4p^2}{D(2p-1)^2}
  = \frac{4p(1-p)^2 + 4p^2(1-p)}{D(2p-1)^2} =
 \frac{4p(1-p)}{D(2p-1)^2}
   \end{align*}

\eat{
  We can find the variance in the estimation of any $j$th Hadamard coefficient $\widehat{c_j}$ as
\begin{align*}
& \var[\E[\widehat{c_j}]] =\frac{\var[\E[O_j]]}{D(2p-1)^2}= \frac{\var[-4pc_j-1+2p+2c_j]}{D(2p-1)^2} \\ &= \frac{-4c_j(4p(1-p))+8p(1-p))}{D(1-2p)^{2}}  =\frac{8p(1-p)(2c_j-1)}{D(2p-1)^{2}} \\ &\leq \frac{8p(1-p)}{D(2p-1)^{2}}
\end{align*}
}

There are $N$ total reports, each of which samples one of $D$
coefficients at random.
Observing that the estimate of any frequency in the original domain is
a linear combination of Hadamard coefficients with unit Euclidean
norm,
we can find an expression for the value of $V_F$ as
$V_F = \frac{4p(1-p)}{\frac{N}{D}D(2p-1)^2} = \frac{4p(1-p)}{N(2p-1)^2}$.
Using $p=\frac{e^\epsilon}{1+e^\epsilon}$ (to ensure LDP), we find
$V_F = \frac{4e^\epsilon}{N(e^\epsilon-1)^2}$.
%, i.e. identical to the OLH and OUE bound. 
%}
%this expression also gives the value of $V_F$.	
%Hence 
%$V_F \leq
%\frac{8p(1-p)}{\frac{N}{D}D(2p-1)^{2}}=\frac{8e^{\epsilon}}{N(e^{\epsilon}-1)^{2}}$
%using $p=\frac{e^{\epsilon}}{1+e^{\epsilon}}$ to ensure LDP.

This method achieves a good compromise between accuracy and
communication since each user transmits only $\lceil\log_2 D\rceil + 1$ bits to
describe the index $j$ and the perturbed coefficient, respectively.
Also, the aggregator can reconstruct the frequencies in the original domain
by computing the estimated coefficients and then inverting HT with
$O(N + D \log D)$ operations, versus $O(ND)$ for OLH.

\medskip
Thus, we have three representative mechanisms to implement a frequency
oracle. Each one provides $\epsilon$-LDP, by considering the probability of seeing the same output from the user if her input were to change.
There are other frequency oracles mechanisms developed offering
similar or weaker variance bounds (e.g. \cite{rappor2:16,telemetry:17}) and resouce trade-offs but we do not include them for brevity.  
%In the next section, we introduce a variant primitive which fuses OLH
%and HRR. 

%\begin{theorem}
%GRR, RR, HRR and OLH satisfy $\epsilon-$LDP.
%\end{theorem}

%\section{Answering Range Queries Under Local Differential Privacy}

%\section{Model And Preliminaries} 
%Along with the histogram estimation primitives presented, we introduce a variant of HRR inspired from OLH.
\eat{
\begin{figure}
    \includegraphics[width=\linewidth]{figs/hrrk_3.png}\par
\caption{Total variation distance ($\frac{L_1}{2}$ score) as a function of $t$ for HRR$t$. HRR$1$ offers the least variance.}
\label{fig:dd3}
\end{figure}

\subsection{Generalized Hadamard Transform (HRR$t$)}

In OLH, we hash an item from a large domain $D$ to a much smaller
domain $g=e^{\epsilon}+1$.
However, we have to trade the gain in accuracy with the amount of
computational resources required while decoding due to the use of
universal hashing.
Our observation is that sampling a Hadamard coefficient can also be
viewed as defining a hash function onto the binary domain $\{-1,1\}$.
Indeed, this process meets the definition of a universal family of
hash functions for $g=2$.
We can generalize this approach to $g=2^t$ by sampling $t$ Hadamard
coefficients, instead of just one, effectively obtaining a universal
family for $g=2^t$.
We refer to this approach as HRR$t$. 
In practice, we typically want the privacy parameter to satisfy
$e^{\epsilon} \leq 3$, so we will confine our description to HRR$2$. 

%Another way to design a hash function is to map an input in Hadamard domain to a small domain $[g]$. We can perform this mapping in a computationally economical way by simply sampling multiple hadamard coefficients u.a.r. Since we would prefer to work with a small privacy budget e.g. $e^{\epsilon} \leq 3$, mapping two binary coefficients to one of the 4 possibilities would make sense. This gives us a new variant of HRR.

\parit{Perturbation:} User $i$ takes the HT of $v_i$ and samples two
coefficient indices $j_1,j_2$ u.a.r. and perturbs
$c_i=(\phi[v_i][j_1], \phi[v_i][j_2]) \in \{-1,1\}^2$ using GRR.
%where $c_i \in [4]$ is formed by concatenating sampled coefficients
%$\phi[j][j_1], \phi[j][j_2] \in \{-1,1\}$.
User $i$ then sends $(j_1,j_2)$ and $c'_i=\text{GRR}(c_i)$ to the aggregator using $p=\frac{e^{\epsilon}}{e^{\epsilon}+3}$.

\parit{Aggregation:}
We can analyze HRR$2$ by observing that we can study the contribution
to the variance of the pair of reports.
Define $p=\frac{e^\epsilon}{e^{\epsilon}+3}$ to be the probability of
both coefficients being reported correctly.
Then, for each coefficient, there is probability
$q=\frac{2p+1}{3} = \frac{e^\epsilon+1}{e^\epsilon+3}$ that it is reported correctly,
and
$1-q$ that it is reported incorrectly. 
We therefore use $\frac{O_j}{\sqrt{D}(2q-1)}$ as the unbiased
estimator for each reported coefficient.
The variance calculation here is a little more involved.
The contribution to the variance from each reported pair of
coefficients will depend whether we get both right (probability $p$), one right and one
wrong (probability $2(1-p)/3$), or both wrong (probability $(1-p)/3$).
We calculate this contribution to (total) variance as:
\[
\begin{array}{rl}
  &
  \frac{1}{D} \Big[2p\Big(1-\frac{1}{2q-1}\Big)^2 +
    2\frac{1-p}{3}\Big(\Big(1+\frac{1}{2q-1}\Big)^2 + \Big(1-\frac{1}{2q-1}\Big)^2\Big)
    \\ & \qquad\qquad\qquad + 2\frac{1-p}{3}\left(1+\frac{1}{2q-1}\right)^2\Big]
  \\
  =& \frac{1}{D}\left( \Big(2p+ \frac{2(1-p)}{3}\Big)\Big(1 - \frac{1}{q}\Big)^2 \Big)+
  \frac{4(1-p)}{3}\Big(1+\frac{1}{2q-1}\Big)^2\right)
  \\
  =& \frac{1}{3D}\left( (4p+2)\left(1 - \frac{1}{2q-1}\right)^2 +
  (4-4p)\left(1+\frac{1}{2q-1} \right)^2 \right)
  \\
  =& \frac{1}{3D}((4p+2)(\frac{4p-4}{4p-1})^2 +
  (4-4p)(\frac{4p+2}{4p-1})^2)
  \\
  =&\frac{1}{3D(4p-1)^2}(4-4p)(4p+2)(4 - 4p + 4p + 2)
  \\
  =&\frac{16}{D(4p-1)^2}(1-p)(2p+1)
\end{array}
\]

This means that the variance in a single point estimate is
$V_F = \frac{16}{(4p-1)^2}(1-p)(2p+1)$.
Using $p=\frac{e^\epsilon}{e^{\epsilon}+3}$, and the fact that we noew
receive $2N$ reports, we then get:

\begin{align*}
  V_F = \frac{16}{2N} \frac{3}{\eeps+3} \frac{3\eeps + 3}{\eeps+3}
  \left(\frac{\eeps+3}{3\eeps-3}\right)^2
  = \frac{8}{N} \frac{\eeps+1}{(\eeps-1)^2}
\end{align*}

Comparing this expression to the one for HRR$1$, we see that there the
variance is always higher. 

\para{Towards generalizing this analysis to HRR$t$}
(Under construction)
For $t \in [\log_2 D]$, the probability that a coefficient is reported correctly is
\begin{align*}
q=p+\frac{2^{t-1}-1}{2^{t}-1}(1-p)=\frac{2^{t-1}(1+p)-1}{2^t-1}
\end{align*}
where $p=\frac{e^{\epsilon}}{e^{\epsilon}+2^{t}-1}$. Therefore, the unbiased estimation is given by $\frac{O_j}{\sqrt{D}(2q-1)}=\frac{O_j}{\sqrt{D}(2 (\frac{2^{t-1}(1+p)-1}{2^t-1}) -1)} =\frac{(2^{t}-1) O_j}{\sqrt{D}(p2^{t}-1)}$

}

\eat{
After receiving reports from each user $i$,
the aggregator computes unbiased estimates for each Hadamard
coefficient, similar to HRR. 
%two coefficients encoded in $c'_i$ are decoupled and aggregated in $T \in \mathbb{R}^{D}$. Let's workout the debiasing step for HRR2.

Let $f_{j}^{0},f_{j}^{1},f_{j}^{2},f_{j}^{3}$ be the true
 frequencies of combinations of two hadamard coefficients at a
 location $i$ i.e. $-1\|-1, -1\|1, 1\|-1, 1\|1$ and $O_{j}$ be the
 observed value of sum of noisy coefficients.
 We know that with GRR $\Pr[-1|-1]=\Pr[1|1]=p+\frac{1-p}{3}=\frac{2p+1}{3}$ and $\Pr[-1|1]=\Pr[1|-1]=\frac{2(1-p)}{3}.$
{
\begin{align*}
\E[O_j] &= -(\frac{2p+1}{3})(f^0_{j}+f^1_{j}) +\frac{2(1-p)(f^0_{j}+f^1_{j})}{3} \\ & -   \frac{2(1-p)(f^2_j+f^3_j)}{3} +    (f^2_{j}+f^3_{j})(\frac{2p+1}{3}) 
\\ &= \Big(\frac{2(1-p)}{3} - \frac{2p+1}{3} \Big)(f^0_j+f^1_j)\\ & +\Big(\frac{2p+1}{3} - \frac{2(1-p)}{3}\Big) (f^2_j+f^3_j) \\ &= (\frac{4p-1}{3}) (1-f^0_j-f^1_j) - (\frac{4p-1}{3}) ( f^0_j+f^1_j) \\&=
(\frac{4p-1}{3}) \Big(1- 2(f^0_j+f^1_j)\Big)
\end{align*}
}%
%\blue{asasd}
Re-arranging, we get an estimate of the true frequency of $-1$'s at location $j$ i.e. $2(\widehat{f}^0_j+\widehat{f}^1_j)= 1-\frac{3\E[O_j]}{4p-1}$. An estimation of the sum of true hadamard coefficients $\widehat{v}[j]$ is 
\begin{align*}
\widehat{v}[j] = 1- 2(\widehat{f}^0_j+\widehat{f}^1_j)=\frac{3\E[O_j]}{4p-1}
\end{align*}

\begin{lemma}
$\E[\widehat{v}[j]]=v[j]$ i.e. $\widehat{v}[j]$ is an unbiased estimation of $v[j]$. 
\end{lemma}
\begin{proof}
\begin{align*}
\E[\widehat{v}[j]] &= \E \Big[\frac{3\E[O_j]}{4p-1}\Big]= \frac{3}{4p-1} (\frac{4p-1}{3}) \Big(f^0_j+f^1_j - 1 + f^0_j+f^1_j\Big) \\ &=  -2(f^0_j+f^1_j)+1 = v[j]
\end{align*}
\end{proof}

\parit{Variance:}

\begin{align*}
\var[\E[\widehat{v}_j]]&= \var\Big[\frac{3\E[O]}{(4p-1)}\Big] = \frac{9}{(4p-1)^{2}} \var\Big[\E[O]\Big] \\ &= \frac{9}{(4p-1)^{2}} \var\Big[ (\frac{4p-1}{3})(2(f^0_j+f^1_j)-1)\Big] \\ &= \frac{9}{(4p-1)^{2}}  \Big[\var[\frac{4p}{3}] (2(f^0_j+f^1_j)-1)  \Big]  \\ &= \frac{9}{(4p-1)^{2}} \Big[ \frac{64p(1-p)}{9}(2(f^0_j+f^1_j)-1)  \Big] \\ & \leq \frac{64p(1-p)}{(4p-1)^{2}} = \frac{64 (\frac{e^{\epsilon}}{e^{\epsilon}+3})(\frac{3}{e^{\epsilon}+3})}{(\frac{4e^{\epsilon}}{e^{\epsilon}+3}-1)^{2}} = \frac{64e^{\epsilon}}{3(e^{\epsilon}-1)^{2}}
\end{align*}
{\bf{Correction For HRR$t$} }

In HRR$t$ framework, the expression for the observed value $\E[O_j]$ has four additive components corresponding to four conditional probabilities $\Pr[j|k] \in j,k \in \{-1,1\}$ and hence has the same closed form irrespective of the value of $t$. What changes with $t$ is the value of $\Pr[-1|-1]$.  
For any $t\in [\log_2 D]$ we have
\begin{align*}
\Pr[-1|-1]=p+\frac{2^{t-1}-1}{2^{t}-1}(1-p)=\frac{2^{t-1}(1+p)-1}{2^t-1}
\end{align*}
where $p=\frac{e^{\epsilon}}{e^{\epsilon}+2^{t}-1}$.

For t=1 we have $\widehat{c}_j=\frac{\E[O_j]}{2\Pr[-1|-1]-1}$. Plugging the generic value for $\Pr[-1|-1]$ we get 
\begin{align*}
\widehat{c}_j = \frac{\E[O_j]}{2 (\frac{2^{t-1}(1+p)-1}{2^t-1}) -1} =\frac{(2^{t}-1) \E[O_j]}{p2^{t}-1}
\end{align*} 

\parit{Variance}

\begin{align*}
& \var[\E[\widehat{c}_j]] = \var[\frac{(2^{t}-1) \E[O_j]}{p2^{t}-1}]=\Big(\frac{2^{t}-1}{p2^{t}-1}\Big)^{2} \var[\E[O_j]] \\ &= \Big(\frac{2^{t}-1}{p2^{t}-1}\Big)^{2} \var\Big[ \frac{c_j(p2^{t}-1)}{2^{t}-1}\Big] \\ &= \Big(\frac{2^{t}-1}{p2^{t}-1}\Big)^{2} (\frac{1-2f_j}{2^t-1})^{2} 4p(1-p)2^{2t} \\ &= \frac{4.2^{2t}p(1-p)(1-2f_j)^{2}}{(p2^{t}-1)^{2}}= \frac{4.2^{2t} (\frac{e^{\epsilon}}{e^{\epsilon}+2^{t}-1})(\frac{2^{t}-1}{e^{\epsilon}+2^{t}-1})}{\Big(\frac{2^{t}e^{\epsilon}-1}{e^{\epsilon}+2^{t}-1}-1\Big)^{2}} \\ &= \frac{4(1-2f_j)^{2} 2^{2t}e^{\epsilon}(2^{t}-1)}{(e^{\epsilon}-1)^{2}(2^{t}-1)^{2}} = \frac{2^{2(t+1)}e^{\epsilon}(1-2f_j)^{2}}{(e^{\epsilon}-1)^{2}(2^{t}-1)}	
\end{align*}
$f_j$ here is the true frequency of $-1$'s at index $j$.

For choosing the optimal value for $t$, we first differentiate the variance $\var[\E[\widehat{c}_j]]$ w.r.t. $t$ and then equate it to zero.
\begin{align*}
\nabla &= (1-2f_j)^{2}\frac{D}{dt}\Big[\frac{2^{2(t+1)}e^{\epsilon}}{(e^{\epsilon}-1)^{2}(2^{t}-1)}\Big]\\ &=(1-2f_j)^{2}\Big(\frac{\ln(2) e^{\epsilon}.2^{2(t+1)+1}}{(e^{\epsilon}-1)^{2}(2^{t}-1)} - \frac{\ln(2) e^{\epsilon}.2^{2(t+1)+t}}{(e^{\epsilon}-1)^{2}(2^{t}-1)^{2}} \Big) \\ &=(1-2f_j)^{2}.\frac{\ln(2) e^{\epsilon} (2^{t}-2).2^{2t+2}}{(e^{\epsilon}-1)^{2}(2^{t}-1)^{2}}
\end{align*}
The value that makes $\nabla =0$ is $t=1$.
}

\section{Range Queries}
\label{sec:range}

\subsection{Problem Definition}

We next formally define the range queries that we would like to
support. 
As in Section~\ref{sec:hist}, we assume $N$
non-colluding individuals each with a private item $z_i \in [D]$.
For any $a < b,  a \in [D], b \in [D]$, a range query $R_{[a,b]} \geq 0$ is
to compute
 \begin{align*} 
R_{[a,b]}= \frac{1}{N} \sum_{i=1}^{N} \mathbb{I}_{ a \leq  z_{i} \leq b} 
 \end{align*}	
where $\mathbb{I}_{p}$ is a binary variable that takes the value $1$ if the predicate $p$ is true and $0$ otherwise. 

 \begin{definition}(Range Query Release Problem)
 Given a set of $N$ users, the goal is to collect information guaranteeing $\epsilon$-LDP to allow approximation of any closed interval of length $r \in[1,D]$. Let $\widehat{R}$ be an estimation of interval $R$ of length $r$ computed using a mechanism $F$. Then the quality of $F$ is measured by the squared error $(\widehat{R}-R)^{2}$.
 \end{definition}

\begin{figure}
\centering
\subfigure[Dyadic decomposition of the domain showing internal node weights.]{
  \includegraphics[width=0.45\textwidth]{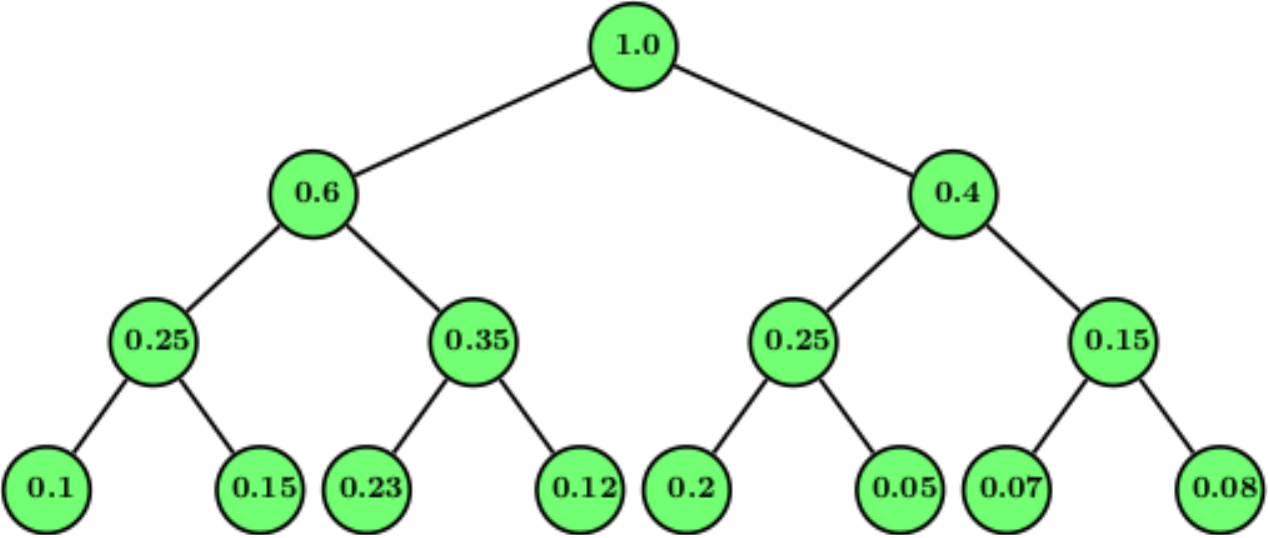}
  \label{fig:dd-weights}
}
\subfigure[Local views for two users $i$ and $j$ ($z_i=1$ and $z_j=5$) with corresponding root to leaf paths marked.]
% A user's ($z_i$ = 5) local histogram. The path from root to leaf 5 (indexed from 0)]{
{\includegraphics[width=0.25\textwidth]{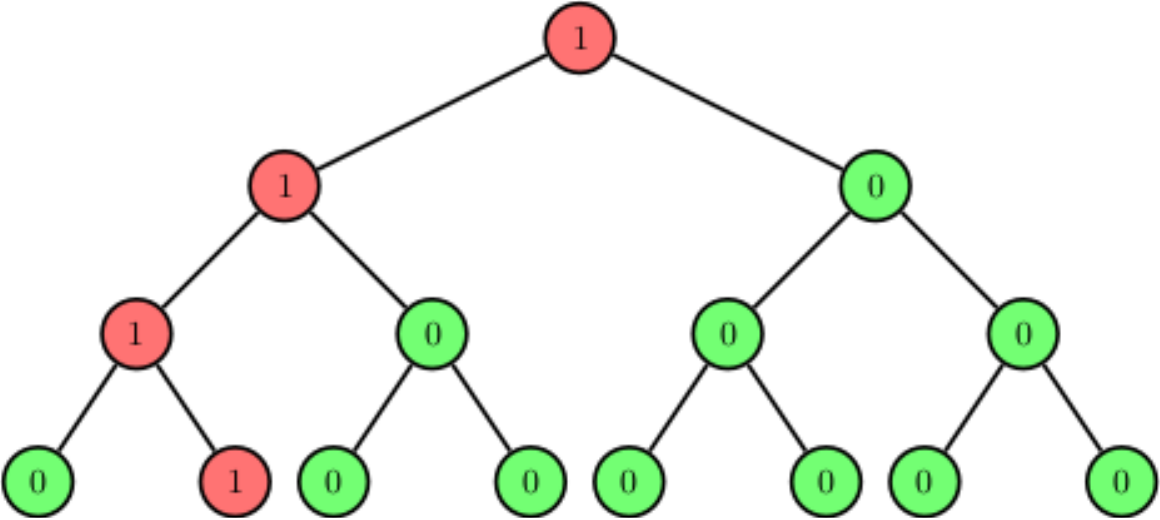}
\includegraphics[width=0.25\textwidth]{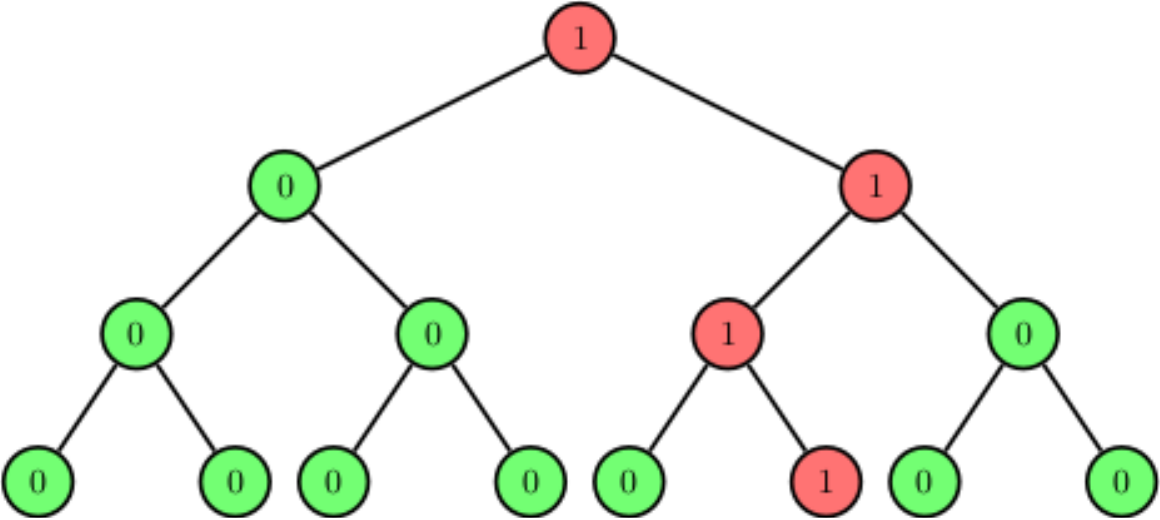}
\label{fig:dd-path}
}
\caption{
An example for dyadic decomposition ($B=2$)}
\label{fig:dd}
\end{figure}

%\section{Summary Of Contributions}
%\begin{enumerate}
%\item We extend the basic hierarchical histogram approach to the local setting using the existing primitives for histogram estimation and provide algorithms for answering range count queries with theoretical guarantees.
%\item As a part of our solution, we reintroduce the well-known data summarization technique of \emph{Haar wavelet decomposition} in LDP and design a primitive for perturbing Haar wavelet coefficients with the best possible accuracy.
%\item We provide a generic primitive for histogram evaluation which is a variant of HRR(see section~\ref{sec:hist}) that provides accuracy comparable to well-known OLH (see section~\ref{sec:hist}) which has a significant computational overhead. 
%\item We provide experiments complementing our theoretical guarantees.
%\end{enumerate}

\subsection{Flat Solutions}
\label{sec:flat}

One can observe that for an interval
$[a,b]$, $R_{[a,b]}=\sum_{i=a}^{b}f_i$, where $f_i$ is the
(fractional) frequency of the item $i \in[D]$.
Therefore a first approach is to simply sum up estimated frequencies
for every item in the range, where estimates are provided by an
$\epsilon$-LDP frequency oracle: 
$\widehat{R}_{[a,b]}=\sum_{i=a}^{b}\widehat{\theta}_i$.
We denote this approach (instantiated by a choice of frequency oracle $F$)
as \emph{flat} algorithms.

\begin{fact}
For any range query $R$ of length $r$ answered using a flat method
with frequency oracle $F$, $\var[\widehat{R}-R]= rV_F$ 
\end{fact}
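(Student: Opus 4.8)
The plan is to observe that the target range count $R_{[a,b]} = \sum_{i=a}^{b} f_i$ is a fixed quantity determined by the (non-random) data, so that $\var[\widehat{R}_{[a,b]} - R_{[a,b]}] = \var[\widehat{R}_{[a,b]}]$, and then to bound the variance of the sum of $r = b - a + 1$ estimated frequencies $\widehat{R}_{[a,b]} = \sum_{i=a}^{b} \widehat{\theta}_i$. Note that this first reduction does not even use unbiasedness of $F$; it only uses that $R$ is deterministic.

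Next I would expand
\begin{align*}
\var[\widehat{R}_{[a,b]}] = \sum_{i=a}^{b} \var[\widehat{\theta}_i] + \sum_{\substack{i,j \in [a,b]\\ i \neq j}} \cov[\widehat{\theta}_i, \widehat{\theta}_j].
\end{align*}
By the stated property of the frequency oracle each $\widehat{\theta}_i$ has variance $V_F$, so the diagonal contributes exactly $r V_F$, and the whole statement reduces to showing the cross-covariances sum to zero. For this I would return to the structure of $F$: the aggregate estimate $\widehat{\theta}_i$ is an affine function $\alpha \sum_{k=1}^{N} o_k[i] + \beta$ of the per-coordinate tallies of the users' reports, with $\alpha,\beta$ depending only on $\epsilon$ and $N$. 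Since users perturb independently, $\cov[\widehat{\theta}_i,\widehat{\theta}_j] = \alpha^2 \sum_{k=1}^{N}\cov[o_k[i], o_k[j]]$, and for a unary-encoding oracle such as OUE the two coordinates $o_k[i]$ and $o_k[j]$ of a single user are flipped by independent coins conditioned on the fixed input $z_k$, so every such covariance is zero. For the hashing-based oracles one makes the analogous argument after conditioning on the sampled hash function / coefficient index, which is precisely the bookkeeping under which $V_F$ was derived. Summing the (vanishing) cross terms with the diagonal yields $\var[\widehat{R}_{[a,b]} - R_{[a,b]}] = r V_F$.

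The main obstacle is this last covariance step: it is immediate for OUE but needs the conditioning argument for OLH and HRR, and one must be careful about whether the inputs $\{z_k\}$ are treated as fixed or random. Under the standard reading in which $V_F$ already denotes the per-item variance computed with this independence bookkeeping, the identity $\var = r V_F$ follows directly from linearity of variance over uncorrelated summands, so the ``proof'' is essentially this unpacking.
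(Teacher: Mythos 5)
The paper states this Fact without proof, and your argument --- reducing $\var[\widehat{R}-R]$ to $\var[\widehat{R}]$ since $R$ is deterministic, then splitting the variance of $\sum_{i=a}^b \widehat{\theta}_i$ into $r$ diagonal terms of $V_F$ each plus cross-covariances that vanish --- is exactly the intended justification, so your proposal is correct and matches the paper's (implicit) reasoning. Your caution about the covariance step is well placed: it is exact for OUE (independent bit flips given $z_k$) and for HRR (orthogonality of distinct Hadamard rows makes the cross terms cancel when each coefficient estimate has the same variance), but for OLH universality only guarantees collision probability at most $1/g$ rather than exact pairwise independence of the tallies, so there the equality is really an idealization consistent with how the paper treats $V_F$ throughout.
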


Note that the variance grows linearly with the interval size which can be as large as $DV_{F}$.
 
 \begin{lemma} 
 The average worst case squared error over evaluation of ${D \choose 2}$ queries $\mathcal{E}$ is $\frac{1}{3}(D+2)V_F.$
 \end{lemma}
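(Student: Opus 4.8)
The plan is to compute the average of $\var[\widehat{R}-R]$ over all ${D\choose 2}$ queries of the form $R_{[a,b]}$ with $a<b$, using the Fact that a flat query of length $r$ has variance $rV_F$. (I will read "average worst case squared error" as the average over all such queries of their variance, which upper-bounds the expected squared error since the flat estimator is unbiased.) So the quantity $\mathcal{E}$ is $\frac{1}{\binom{D}{2}}\sum_{1\le a<b\le D} (b-a)\, V_F$, and everything reduces to evaluating $\sum_{1\le a<b\le D}(b-a)$ and dividing by $\binom{D}{2}$.

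First I would reindex by the range length $r=b-a$, which runs from $1$ to $D-1$; for each fixed $r$ there are exactly $D-r$ pairs $(a,b)$ with $b-a=r$. Hence $\sum_{1\le a<b\le D}(b-a) = \sum_{r=1}^{D-1} r\,(D-r)$. Next I would split this as $D\sum_{r=1}^{D-1} r - \sum_{r=1}^{D-1} r^2$ and plug in the closed forms $\sum_{r=1}^{D-1} r = \frac{(D-1)D}{2}$ and $\sum_{r=1}^{D-1} r^2 = \frac{(D-1)D(2D-1)}{6}$. Simplifying gives $\frac{(D-1)D}{6}\bigl(3D - (2D-1)\bigr) = \frac{(D-1)D(D+1)}{6} = \binom{D+1}{3}$.

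Finally I would divide by $\binom{D}{2} = \frac{D(D-1)}{2}$ to get the average range length $\frac{(D-1)D(D+1)/6}{D(D-1)/2} = \frac{D+1}{3}$, and then multiply by $V_F$. This yields $\mathcal{E} = \frac{D+1}{3}V_F$, which is slightly off from the stated $\frac{1}{3}(D+2)V_F$; I would double-check whether the paper's count of ${D\choose 2}$ queries is actually meant to include the $D$ length-zero (point) queries or the full $\binom{D+1}{2}$ intervals $[a,b]$ with $a\le b$, since including the $D$ trivial queries changes the denominator to $\binom{D+1}{2}$ and can shift the constant to match $\frac{D+2}{3}$. The only real "obstacle" here is fixing this bookkeeping convention about which set of queries is being averaged over; the summation itself is a routine application of the standard power-sum formulas, and once the right index set is pinned down the result follows immediately.
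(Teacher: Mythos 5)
Your approach is the same as the paper's: group queries by length, sum $r\cdot(\text{count of length-}r\text{ queries})\cdot V_F$ using the Fact, and divide by the number of queries. Your diagnosis of the constant is exactly right: the paper's own proof sums $r(D-r+1)$ for $r=1,\dots,D$, i.e.\ it counts all $\binom{D+1}{2}$ intervals $[a,b]$ with $a\le b$ and takes the length to be $r=b-a+1$ (which is the correct $r$ for the Fact, since $\widehat{R}_{[a,b]}$ sums $b-a+1$ point estimates, so your use of $r=b-a$ slightly undercounts the variance); evaluating $\sum_{r=1}^{D} r(D-r+1)V_F=\frac{D(D+1)(D+2)}{6}V_F$ and dividing by $\binom{D+1}{2}=\frac{D(D+1)}{2}$ gives the stated $\frac{1}{3}(D+2)V_F$, while the $\binom{D}{2}$ appearing in the statement and proof is loose bookkeeping on the paper's part rather than an error in your summation.
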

 \begin{proof}
 There are $D-r+1$ queries of length $r$. Hence the average error is $\mathcal{E} = {\sum_{r=1}^{D} r(D-r+1)V_F}\big/{{D \choose 2}}=\frac{1}{3}(D+2)V_F$
 \end{proof}
 
 \subsection{Hierarchical Solutions}
\label{sec:hh}
We can view the problem of answering range queries in terms of
representing the frequency distribution via some collection of histograms,
and producing the estimate by combining information from bins
in the histograms.
The ``flat'' approach instantiates this, and keeps one bin for each individual
element.
This is necessary in order to answer range queries of length 1
(i.e. point queries).
However, as observed above, if we have access only to point queries,
then the error grows in proportion to the length of the
range.
It is therefore natural to keep additional bins over subranges of the
data. 
A classical approach is to impose a hierarchy on the domain
items in such a way that the frequency of each item contributes to
multiple bins of varying granularity.
With such structure in place, we can answer a given query by adding counts from a relatively small number of bins.  
There are many hierarchical methods possible to compute histograms. 
Several of these have been tried in the context of centralized DP. 
But to the best of our knowledge, the methods that work best in
centralized DP tend to rely on a complete view on the distribution, or
would require multiple interactions between users and aggregator when
translated to the local model. 
This motivates us to choose more simple yet effective strategies
for histogram construction in the LDP setting. 
%of decomposing an interval into $B$-adic fashion.    
We start with the standard notion of $B$-adic intervals and a useful
property of $B$-adic decompositions. 

\begin{fact}
For  $j \in [\log_B D]$ and $B \in \mathbb{N}^{+}$, an interval is B-adic if it is of the form $kB^{j}...(k+1)B^{j}-1$ i.e. its length is of a power of $B$ and starts with an integer multiple of its length.
\end{fact}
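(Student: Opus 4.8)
The statement is essentially definitional: it asserts that the algebraic description of a $B$-adic interval, namely $\{kB^{j},\, kB^{j}+1,\, \ldots,\, (k+1)B^{j}-1\}$, is equivalent to the informal characterization ``the length is a power of $B$, and the interval starts at an integer multiple of that length.'' So the plan is to verify the two directions of this equivalence by direct arithmetic, and then to record the nesting/partition property that is the ``useful'' part alluded to in the surrounding text.

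First, starting from the form $I = \{kB^{j},\ldots,(k+1)B^{j}-1\}$ with $k$ a nonnegative integer and $j \in [\log_B D]$: its cardinality is $(k+1)B^{j}-1-kB^{j}+1 = B^{j}$, which is a power of $B$; and its smallest element is $kB^{j}$, which is the integer $k$ times its length $B^{j}$. That settles one direction. Conversely, suppose $I$ is an interval of the domain whose length is $\ell = B^{j}$ for some admissible $j$ and whose least element equals $m\ell$ for some nonnegative integer $m$. Then $I = \{mB^{j},\, mB^{j}+1,\, \ldots,\, mB^{j}+B^{j}-1\} = \{mB^{j},\ldots,(m+1)B^{j}-1\}$, which is exactly the stated form with $k=m$. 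The only thing to watch is the inclusive-endpoint convention (the trailing $-1$) and the range of the multiplier, namely $0 \le m < D/B^{j}$, so that $I$ stays inside the domain $[D]$; this is a one-line off-by-one check.

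To make good on the phrase ``useful property of $B$-adic decompositions,'' I would then append the two structural observations that the hierarchical construction in Section~\ref{sec:hh} actually relies on: (i) for each fixed level $j$, the $B$-adic intervals of length $B^{j}$ are precisely $\{[kB^{j},(k+1)B^{j}-1] : 0 \le k < D/B^{j}\}$, and these $D/B^{j}$ intervals partition $[D]$; and (ii) each length-$B^{j}$ $B$-adic interval is the disjoint union of the $B$ length-$B^{j-1}$ $B$-adic intervals nested inside it, which is immediate by substituting $k' \in \{Bk, Bk+1, \ldots, Bk+B-1\}$ into the form at level $j-1$. There is no genuine obstacle here — the entire content is elementary counting plus an endpoint bookkeeping check — so the ``proof'' amounts to stating these identities cleanly; the real work of the section is deferred to the subsequent claim that an arbitrary query range decomposes into only $O(B\log_B D)$ such intervals, which is what will drive the logarithmic variance bound.
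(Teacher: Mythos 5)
Your proposal is correct: the statement is indeed definitional, and the paper itself offers no proof, treating it as a stated fact. Your direct verification of the two-way equivalence (plus the partition and nesting observations) is sound and consistent with how the paper uses $B$-adic intervals in Section~\ref{sec:hh}.
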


% $B$-adic decomposition of the domain satisfies the following useful property.

\begin{fact} Any sub-range of length $[a,b]$ of length $r$ from $[D]$ can be decomposed into $\leq (B-1)(2\log_{B} r+1)$ disjoint $B$-adic ranges.   
\end{fact}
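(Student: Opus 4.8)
The plan is to prove the fact constructively: I would exhibit an explicit decomposition via a ``peel from both ends'' procedure and count the pieces stage by stage. Set $r=b-a+1$ and $k=\lfloor\log_B r\rfloor+1$, so that $B^{k}>r$. The proof maintains a shrinking subinterval $[a',b']\subseteq[a,b]$, initialised to $[a,b]$, together with the invariant that at the start of stage $j$ (for $j=0,1,2,\dots$) both $a'$ and $b'+1$ are multiples of $B^{j}$; this holds trivially at stage $0$. At stage $j<k-1$, first repeatedly peel off from the left the interval $[a',a'+B^{j}-1]$, which is $B$-adic since $a'$ is a multiple of $B^{j}$ by the invariant, advancing $a'$ by $B^{j}$ each time, until $a'$ becomes a multiple of $B^{j+1}$; writing $a'=cB^{j}$, this takes $B-(c\bmod B)\le B-1$ peels. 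Symmetrically, peel $[b'-B^{j}+1,b']$ off the right until $b'+1$ is a multiple of $B^{j+1}$, again at most $B-1$ peels. (If ever $a'>b'$ the remainder is empty and the procedure stops.) The invariant is thereby restored at stage $j+1$, and stage $j$ contributes at most $2(B-1)$ intervals.

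The crucial observation is that $b'-a'+1$ never increases, so it stays $\le r$, while by the invariant it is a multiple of $B^{j}$ at the start of stage $j$. Hence at the start of stage $k-1$ the remainder has length a multiple of $B^{k-1}$ that is $\le r<B^{k}$, i.e.\ equal to $mB^{k-1}$ for some integer $0\le m\le B-1$; since $a'$ is a multiple of $B^{k-1}$, this remainder is exactly a disjoint union of $m$ consecutive $B$-adic intervals of length $B^{k-1}$, which we take, finishing. Summing, stages $0,\dots,k-2$ give at most $2(B-1)(k-1)$ intervals and stage $k-1$ gives at most $B-1$, for a total of at most $(B-1)\bigl(2(k-1)+1\bigr)=(B-1)(2\lfloor\log_B r\rfloor+1)\le(B-1)(2\log_B r+1)$ pairwise-disjoint $B$-adic intervals whose union is $[a,b]$, as claimed.

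Every step is elementary; the only delicate points are maintaining the divisibility invariant through the peeling, handling the early-stopping corner cases (empty remainder, and $r$ an exact power of $B$), and — most importantly for matching the stated constant — observing that the top stage $k-1$ needs only $B-1$ pieces rather than $2(B-1)$, which is exactly what turns the naive bound $2k$ into $2k-1$. I would sanity-check the argument on the dyadic case $B=2$, where the bound specialises to $2\log_2 r+1$ intervals, before trusting the general constant.
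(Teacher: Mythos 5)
Your proof is correct, and in fact the paper offers no proof of this Fact at all --- it is stated bare, followed only by the worked example $[2,22]$ for $D=32$, $B=2$ --- so your constructive argument fills a genuine gap rather than duplicating anything. The peeling procedure, the divisibility invariant on $a'$ and $b'+1$, and the observation that the final stage needs only $B-1$ pieces (because the surviving length is a multiple of $B^{k-1}$ bounded by $r<B^{k}$) together give exactly the stated constant $(B-1)(2\log_B r+1)$; running your algorithm on the paper's example reproduces its decomposition $[22,22],[2,3],[20,21],[4,7],[16,19],[8,15]$ verbatim. One microscopic imprecision: the peel count at stage $j$ is $B-(c\bmod B)$ only when $c\bmod B\neq 0$ (when $c$ is already a multiple of $B$ you peel zero times, not $B$ times), so the expression should be read as ``$0$ if $B\mid c$, else $B-(c\bmod B)$''; the bound $\le B-1$ is unaffected. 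The early-stopping case $a'>b'$ and the $r=1$ degenerate case are both handled, so nothing further is needed.
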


For example, for $D=32,B=2$, the interval $[2, 22]$ can be decomposed into sub-intervals
$[2,3]\cup [4,7] \cup [8,15] \cup [16,19] \cup [20,21] \cup [22,22]$.

The $B$-adic decomposition can be understood as organizing the domain
under a complete $B$-ary tree where each node corresponds to a bin of
a unique $B$-adic range.
The root holds the entire range and the leaves hold the counts for
unit sized intervals.
A range query can be answered by a walk over the tree similar to the standard \emph{pre-order traversal} and therefore a range query can be answered with at most 
$2(B-1)(1+\log_{B} r)$ nodes, and at most $2(B-1)(\log_{B}
D+\frac12)-1$ in the worst case. 

\subsection{ Hierarchical Histograms (HH)}
Now we describe our framework for computing hierarchical histograms.
All algorithms follow a similar structure but differ on the perturbation primitive $F$ they use:

\para{Input transformation}: user $i$ locally arranges the input $z_i
\in [D]$ in the form of a full $B$-ary tree of height $h$.
Then $z_i$ defines a unique path from a leaf to the root with
a weight of 1  attached to each node on the path, and zero
elsewhere.
Figure~\ref{fig:dd} shows an example.
Figure~\ref{fig:dd-weights} shows the dyadic ($B=2$) decomposition of the
input vector $[0.1,0.15,0.23,0.12,0.2,0.05,0.07,0.08]$, where the weights on internal nodes
are the sum of the weights in their subtree. 
Figure~\ref{fig:dd-path} illustrates two user's local views ($z_i=1$ and $z_j=5$). In each local histogram, the nodes in the path from leaf to the root are shaded in red and have a weight of 1 on each node.

\para{Perturbation}: $i$ {\em samples} a level $l \in [h]$ with probability
$p_{l}$.
There are $2^l$ nodes at this level, with exactly one node of weight
one and the rest zero.
Hence, we can apply one of the mechanisms from Section~\ref{sec:prelims}. 
User $i$ perturbs this vector using some frequency oracle $F$ and sends the perturbed information to the aggregator along with the level id $l$.

\para{Aggregation}: The aggregator builds an empty tree with the same
dimensions and adds the (unbiased) contribution from each user to the
corresponding nodes, to estimate the fraction of the input at each
node.  Range queries are answered by aggregating the nodes from the
$B$-adic decomposition of the range.

\para{Key difference from the centralized case}:
Hierarchical histograms have been proposed and evaluated in the
centralized case.
However, the key difference here comes from how we generate
information about each level.
In the centralized case, the norm is to split the ``error budget''
$\epsilon$ into $h$ pieces, and report the {\em count} of users in
each node; in contrast, we have each user sample a single level,
and the aggregator estimates the {\em fraction} of users in each
node.
The reason for sampling instead of splitting emerges from the
analysis: splitting would lead to an error proportional to $h^2$,
whereas sampling gives an error which is at most proportional to $h$.
Because sampling introduces some variation into the number of users
reporting at each level, we work in terms of fractions rather than
counts; this is important for the subsequent post-processing step. 

%\para{Post-processing}: Hay et al. \cite{HRMS:10} introduced a novel post-processing technique in the centralized case on top of the basic hierarchical method, based on the observation that the accuracy of the query results can be further improved if counts at different levels satisfy certain consistency properties. We adopt their methods in the local case to improve our estimates.

\smallskip
In summary, the approach of hierarchical decomposition extends to LDP
by observing the fact that it is a \emph{linear} transformation of the
original input domain.
This means that adding information from the hierarchical
decomposition of each individual's input yields the decomposition of
the entire population.
Next we evaluate the error in estimation using the hierarchical methods.
 
\para{Error behavior for Hierarchical Histograms.}
We begin by showing that the overall variance can be expressed in
terms of the variance of the frequency oracle used, $V_F$. 
In what follows, we denote hierarchical histograms aggregated with fanout B as $\text{HH}_B$.

\begin{theorem}
When answering a range query of length $r$ using a
primitive $F$, the worst case variance $V_r$ under the $\text{HH}_B$ framework is
${V}_r \leq V_F\sum_{l=1}^{\alpha} (2B-1)\frac{1}{p_l}$ where $\alpha=
(\lceil\log_B r\rceil +1)$.
\label{thm:hhvar}
\end{theorem}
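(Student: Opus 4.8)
The plan is to decompose the range query into its $B$-adic pieces, express the estimator as a sum of per-node estimators, and then bound the variance by exploiting independence across levels together with the fact that each user contributes to at most one node per level. First I would recall from the earlier fact that any range $[a,b]$ of length $r$ decomposes into at most $2(B-1)$ $B$-adic intervals at each level of the tree, across $\alpha = \lceil \log_B r \rceil + 1$ relevant levels (the leaves plus the levels used by the decomposition). Writing $\widehat{R} = \sum_{v} \widehat{\theta}_v$ where the sum is over the nodes $v$ in the $B$-adic decomposition, the variance of $\widehat{R}$ is the variance of this sum of node estimators.

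The key observation is how to aggregate the per-node variances. Fix a level $l$. A user who samples level $l$ (which happens with probability $p_l$) runs the frequency oracle $F$ on the length-$2^l$ (more precisely $B^l$) indicator vector at that level; this contributes variance $V_F$ scaled by $1/p_l$ to each node estimate at level $l$, because only a $p_l$ fraction of the $N$ users report at that level, and $V_F$ is stated to scale as $1/N$. Since at most $2(B-1)$ nodes of the decomposition lie at level $l$, and — crucially — a single user's report at level $l$ touches only one node, the contributions from distinct nodes at the same level are either independent or negatively correlated, so summing their variances is an upper bound. Summing $V_F \cdot 2(B-1)/p_l$ over the at most $\alpha$ levels, and being slightly more careful about the exact node count (the bound $(2B-1)$ rather than $2(B-1)$ presumably absorbs the root/boundary cases and the $+1$ in the fact $2(B-1)(2\log_B r+1)$), yields $V_r \le V_F \sum_{l=1}^{\alpha} (2B-1)\frac{1}{p_l}$.

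The step I expect to be the main obstacle is making the cross-term argument fully rigorous: one must argue that the covariance between node estimators — both within a level and across levels — is non-positive (or zero), so that $\var$ of the sum is at most the sum of $\var$'s. Within a level this follows because a user reports to at most one node, so the oracle's per-coordinate estimates at that level are negatively correlated (as with any unbiased histogram frequency oracle). Across levels it follows from the independence of the level-sampling step together with the independence of each user's randomness. A secondary subtlety is correctly accounting for the $1/p_l$ factor: since the level is sampled i.i.d. per user, the effective sample size at level $l$ is a $\mathrm{Binomial}(N,p_l)$ random variable, and one should either condition on it or use the "fractions not counts" device mentioned in the text to get the clean $V_F/p_l$ bound; I would handle this by noting $V_F \propto 1/N$ and replacing $N$ by its expectation $N p_l$, deferring the concentration argument (or the exact post-processing justification) to the discussion of why fractions are used.
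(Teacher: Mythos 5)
Your proposal is correct and follows essentially the same route as the paper's proof: decompose the range into at most $2(B-1)$ $B$-adic nodes per level over $\alpha=\lceil\log_B r\rceil+1$ levels, note that each node at level $l$ is estimated from the roughly $N p_l$ users who sampled that level so its variance is $V_F/p_l$, and sum over nodes and levels. If anything you are more careful than the paper, which silently adds per-node variances without addressing the cross-covariance terms and treats $N_l = p_l N$ as exact ``in expectation''; your remarks on non-positive covariances and on the binomial level-sample size fill in steps the paper leaves implicit.
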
 
\begin{proof}
Recall that all the methods we consider have the same (asymptotic)
variance bound $V_F = O\big(\frac{e^{\epsilon}}{N(e^{\epsilon} - 1)^2}\big)$,
  with $N$ denoting the number of users contributing to the
  mechanism.
Importantly, this does not depend on the domain size $D$, and so we
can write $V_F \le \psi_{F}(\epsilon)/N$, where $\psi_{F}(\epsilon)$ is a
constant for method $F$ that depends on $\epsilon$. 
% As mentioned before, primitives IRR, OLH and GRR have variance for
% each cell of the same order and does not vary with the dimension
% $D$.
This means that once we fix the method $F$, the variance $V_l$ for any
node  at level $l$ will be the same, and is determined by $N_l$, the
number of users reporting on level $l$.
The range query $R_{[a,b]}$ of length $r$ is decomposed into
at $2(B-1)$ nodes at each level, for
$\alpha = \lceil\log_B r\rceil + 1$ levels (from leaves upwards). 
So we can bound the total variance $\mathcal{V}^{r}$ in our estimate
by
\[\sum_{l=1}^{\alpha} (2B-1)V_l = \sum_{l=1}^{\alpha} (2B-1)V_F/p_l
= (2B-1)V_F \sum_{l=1}^{\alpha} \frac{1}{p_l}
\]
using the fact that (in expectation) $N_l = p_l N$. 
\end{proof}

In the worst case, $\alpha = h$, and we can minimize this bound by a
uniform level sampling procedure: 

\begin{lemma}
The quantity $\sum_{l=1}^{h}\frac{1}{p_l}$ subject to 
$0 \le p_l \le 1$ and $\sum_{l=1}^{h}p_l=1$ 
is minimized by setting $p_l=\frac{1}{h}$.
\label{lemma:minsumprob}
\end{lemma}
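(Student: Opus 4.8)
The plan is to minimize $\sum_{l=1}^{h}\frac{1}{p_l}$ over the probability simplex by a standard convexity/symmetry argument, and then verify that the symmetric candidate $p_l = 1/h$ is indeed the minimizer (not just a critical point). First I would note that the objective is strictly convex on the open region where all $p_l > 0$ (each term $1/p_l$ is strictly convex, and a sum of strictly convex functions is strictly convex), and that the feasible set $\{p : p_l \ge 0, \sum_l p_l = 1\}$ is convex and compact. The boundary (where some $p_l \to 0$) drives the objective to $+\infty$, so the minimum is attained at an interior point, where we may apply Lagrange multipliers.

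Next I would set up the Lagrangian $\mathcal{L}(p,\lambda) = \sum_{l=1}^{h} \frac{1}{p_l} + \lambda\left(\sum_{l=1}^{h} p_l - 1\right)$ and take partial derivatives: $\partial \mathcal{L}/\partial p_l = -1/p_l^2 + \lambda = 0$, which gives $p_l = 1/\sqrt{\lambda}$ for every $l$ — i.e., all the $p_l$ are equal. Combined with the constraint $\sum_l p_l = 1$, this forces $p_l = 1/h$. Since the objective is strictly convex and the constraint is linear, this unique stationary point is the global minimum, with value $\sum_{l=1}^h h = h^2$.

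An alternative (and arguably cleaner) route I would mention is to invoke the Cauchy--Schwarz inequality directly: $\left(\sum_{l=1}^h p_l\right)\left(\sum_{l=1}^h \frac{1}{p_l}\right) \ge \left(\sum_{l=1}^h 1\right)^2 = h^2$, so $\sum_{l=1}^h \frac{1}{p_l} \ge h^2$ whenever $\sum_l p_l = 1$, with equality iff the vectors $(\sqrt{p_l})$ and $(1/\sqrt{p_l})$ are proportional, i.e. iff all $p_l$ are equal, hence $p_l = 1/h$. This makes the lower bound and the equality condition simultaneous and avoids any appeal to second-order conditions. I would likely present the Cauchy--Schwarz version as the main proof and relegate the Lagrangian computation to a remark, or vice versa.

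There is no real obstacle here — the statement is a textbook optimization fact. The only thing to be slightly careful about is the boundary behavior: one must rule out minimizers on the boundary of the simplex, but since $1/p_l \to \infty$ as $p_l \to 0^+$ while the equal-allocation value $h^2$ is finite, the infimum is not approached at the boundary, so restricting attention to the interior is justified. The Cauchy--Schwarz argument sidesteps even this, since it is a global inequality valid on the whole feasible set (with the convention that the bound is trivially satisfied — indeed $+\infty$ — if some $p_l = 0$).
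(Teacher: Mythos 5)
Your proposal is correct, and its central computation---the Lagrangian $\mathcal{L}(p,\lambda)=\sum_{l}\frac{1}{p_l}+\lambda(\sum_l p_l-1)$, the stationarity condition $\lambda=1/p_l^2$ for all $l$, hence equal $p_l=1/h$---is exactly the argument the paper gives. What you add beyond the paper is genuinely worthwhile: the paper stops at identifying the critical point, whereas you justify that it is the \emph{global} minimum (strict convexity of the objective on the interior, plus the observation that the objective blows up as any $p_l\to 0^+$, so the minimizer cannot sit on the boundary of the simplex). Your alternative Cauchy--Schwarz argument, $\bigl(\sum_l p_l\bigr)\bigl(\sum_l \tfrac{1}{p_l}\bigr)\ge h^2$ with equality iff all $p_l$ are equal, is cleaner still: it delivers the lower bound $h^2$ and the equality condition in one line, with no need for calculus, second-order conditions, or boundary analysis. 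One small point in your favor: the paper's displayed gradient contains a typo (its $\partial\mathcal{L}/\partial\lambda$ component reads $\sum_l \frac{1}{p_l}-1$ where it should read $\sum_l p_l-1$), and your version states the constraint correctly.
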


\begin{proof}
We use the Lagrange multiplier technique, and 
%To begin with the standard steps, we first
define a new function $\mathcal{L}$, introducing a new variable $\lambda$.
\begin{align*}
  \textstyle
\mathcal{L}(p_1,..,p_h,\lambda)=(\sum_{l=1}^{h}\frac{1}{p_l}) + \lambda(\sum
_{l=1}^{h}p_l-1)
\end{align*}
We compute the gradient as 
%\begin{align*}
\[\begin{array}{rl}
  \textstyle
\nabla_{p_1,..,p_h,\lambda} \mathcal{L}(p_1,..,p_h,\lambda) &=\Big(\frac{	\partial \mathcal{L}}{\partial p_1},\frac{	\partial \mathcal{L}}{\partial p_2},..,\frac{	\partial \mathcal{L}}{\partial p_h},\frac{	\partial \mathcal{L}}{\partial \lambda} \Big)\\ &= \Big( \lambda-\frac{1}{p^{2}_1},..,\lambda-\frac{1}{p^{2}_h},\sum
_{l=1}^{h}\frac{1}{p_l}-1  \Big)
\end{array}
\]
%\end{align*}
Equating $\nabla_{p_1,..,p_h,\lambda} \mathcal{L}(p_1,..,p_h,\lambda)=0$, we get $\lambda=\frac{1}{p^{2}_1}=\frac{1}{p^{2}_2}=...=\frac{1}{p^{2}_h}$ and $\sum
_{l=1}^{h}\frac{1}{p_l}=1$.
Hence, 
%Plugging $p_l = \sqrt{\lambda}$, we have $\frac{h}{\sqrt{\lambda}}=1$
%i.e.
$p_l=1/\sqrt{\lambda}=1/h$.
%Hence, choosing $p_l=\frac{1}{h}$ minimizes the factor $\sum_{l=1}^{h}\frac{1}{p_l}$.
\end{proof}
%$V_l \in O(\frac{1}{N_{l}\epsilon^{2}})$,
% where $N_l=p_l N$ is the total population that sampled level $l$.
%Hence $\mathcal{V}_r \leq V_F\sum_{l=1}^{\alpha} \frac{1}{p_l}$ and $V_F= O(\frac{1}{\epsilon^{2}N})$.
Then, setting $p_l = \frac{1}{h}$ in Theorem~\ref{thm:hhvar}
gives
\begin{equation}
  V_r \le (2B-1)V_F h(\lceil \log_B r \rceil + 1). 
  \label{eq:hhvar}
  \end{equation}
  
\para{Hierarchical versus flat methods.}
The benefit of the HH approach over the baseline flat method depends
on the factor $(2B-1)h\alpha$ versus the quantity $r$.
Note that $h= \log_B D + O(1)$ and $\alpha = \log_B r + O(1)$,
so we obtain an improvement over flat methods
when $r > 2B\log^2_B D$, for example.
When $D$ is very small, this may not be achieved:
for $D=64$ and $B=2$, this condition yields $r > 128 > D$.
However, for larger $D$, say $D=2^{16}$ and $B=2$, we obtain
$r > 1024$, which corresponds to approximately 1.5\% of the range. 
%For brevity, in what follows we will write HH$_B$ to refer to the
%hierarchical histogram approach with fan-out $B$. 

%\smallskip
%Our next result considers the average error, across all possible
%range queries. 

%hinges on the factor $\sum_{l=1}^{\alpha} \frac{1}{p_l}\leq \sum_{l=1}^{h} \frac{1}{p_l}$. Smaller this factor compared to $r$, more accurate is the HH approach. Using \ref{lemma:minsumprob}, we plug $p_l=\frac{1}{h}$ and have,
%\begin{align}
%V^{r} \leq  V_F\sum_{l=1}^{\alpha} \frac{1}{p_l}=V_F\sum_{l=1}^{\alpha} h= O((B-1)\log_B D \log_B r V_F) 
%\end{align}

%First, we need to determine under what distribution should we sample
%the level $l$.
%Let's evaluate the worst case error for  hierarchical HH framework at high level.
% Notice the how the variance $O(rV_F)$ (or $O(D V_F)$ in the worst case) from the flat case is reduced to 
% $O((B-1)\log_B D \log_B r V_F)$ (or $O((B-1)\log^{2}_B D V_F)$ in the worst case) in the hierarchical case.
 
\begin{theorem}
The worst case average (squared) error incurred  while answering all
${ D \choose 2}$ range queries using HH$_B$, $\mathcal{E}_B$, is
(approximately) $2(B-1)V_F \log_B D \log_B \Big(\frac{ 3D^{2}}{1+2D}
\Big)$
\label{thm:worst_err}
\end{theorem}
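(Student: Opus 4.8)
The plan is to average the per-query worst-case variance bound from Theorem~\ref{thm:hhvar} (with the optimal uniform sampling, i.e.\ equation~\eqref{eq:hhvar}) over all $\binom{D}{2}$ range queries, exactly as was done for the flat method in Lemma~\ref{lemma:minsumprob}'s predecessor. First I would recall that a range of length $r$ incurs variance at most $(2B-1)V_F\, h\,(\lceil \log_B r\rceil + 1)$, and that there are $D-r+1$ distinct ranges of length $r$. Hence the quantity to estimate is
\[
\mathcal{E}_B \;=\; \frac{(2B-1)V_F\, h}{\binom{D}{2}} \sum_{r=1}^{D} (D-r+1)\bigl(\lceil \log_B r\rceil + 1\bigr).
\]
The main task is therefore to estimate the weighted sum $\sum_{r=1}^D (D-r+1)(\log_B r + 1)$, dropping ceilings for an approximate (rather than exact) bound, as the theorem statement only claims an approximate equality.

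Next I would evaluate that sum by replacing it with the integral $\int_{1}^{D} (D-r+1)(\log_B r + 1)\,dr$. Splitting off the ``$+1$'' term gives $\int_1^D (D-r+1)\,dr \approx D^2/2 = \binom{D}{2} + O(D)$, which after dividing by $\binom{D}{2}$ contributes a lower-order additive constant that gets absorbed. The dominant term is $\int_1^D (D-r+1)\log_B r\, dr$; using $\int x \log x = \tfrac{x^2}{2}\log x - \tfrac{x^2}{4}$ and $\int \log x = x\log x - x$, the integral evaluates (in base $B$) to something of the form $\tfrac{D^2}{2}\log_B D - cD^2 + O(D\log D)$ for an explicit constant $c$ coming from the $\log e$ factors. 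Dividing by $\binom{D}{2}\approx D^2/2$ then yields $\log_B D - 2c + o(1)$, and the constant $2c$ is exactly what the $\log_B\!\bigl(\tfrac{3D^2}{1+2D}\bigr) = \log_B D + \log_B\!\bigl(\tfrac{3D}{1+2D}\bigr)$ form is packaging: $\log_B(3D/(1+2D)) \to \log_B(3/2)$, so the claim is that $2c$ (the averaged-out linear contribution) matches $\log_B(2/3)$, i.e.\ the sum $\tfrac{1}{\binom D2}\sum_r (D-r+1)\log_B r \approx \log_B D + \log_B\tfrac{3}{2}$. I would verify this constant-matching carefully, since that is where the specific closed form $\tfrac{3D^2}{1+2D}$ is pinned down, presumably by the authors keeping an exact combinatorial identity for $\sum_r (D-r+1)\log_B r$ rather than passing to an integral.

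The main obstacle I anticipate is precisely this last bookkeeping: the statement is phrased as ``(approximately)'' and folds both the $\log_B D$ factor from $h$ and the averaged $\alpha = \log_B r + 1$ factor into a product $2(B-1)V_F \log_B D \,\log_B(\tfrac{3D^2}{1+2D})$, which only works out if the ``$+1$'' additive terms and the sub-leading linear-in-$D$ corrections recombine into that particular rational expression inside the logarithm. Pinning down why $\tfrac{3D^2}{1+2D}$ rather than, say, $\tfrac{3D^2}{2D+2}$ or a bare $\tfrac32 D$ requires tracking the exact value of $\sum_{r=1}^D r(D-r+1) = \binom{D}{2}\cdot\tfrac{D+2}{3}$ (used already in the flat-method lemma) alongside $\sum_{r=1}^D (D-r+1)$; I expect the $\tfrac{1+2D}{3}$ in the denominator to trace back to rewriting $\binom D2 = \tfrac{D(D-1)}{2}$ against a numerator count, and the factor $3$ and the $2D$ to come from these exact sums. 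The rest --- the $(2B-1)\to 2(B-1)$ simplification and the $h = \log_B D$ substitution --- is routine, so I would present the integral/summation estimate in detail and flag the closed-form constant as the delicate step.
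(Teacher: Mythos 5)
Your top-level strategy is the same as the paper's: average the per-length variance bound over all $\binom{D}{2}$ queries with weight $D-r+1$ for length $r$, so everything reduces to estimating $\frac{1}{\binom{D}{2}}\sum_{r=1}^{D}(D-r+1)\log_B r$. The gap sits exactly at the step you flag as delicate, and it does not resolve the way you hope. Carrying out your integral (or an Euler--Maclaurin estimate of the sum) gives $\sum_{r=1}^{D}(D-r+1)\ln r=\tfrac{D^2}{2}\ln D-\tfrac{3D^2}{4}+O(D\log D)$, so after dividing by $\binom{D}{2}$ the weighted average of $\log_B r$ is $\log_B D-\tfrac32\log_B e+o(1)=\log_B\big(De^{-3/2}\big)+o(1)$, with $e^{-3/2}\approx 0.223$. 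The theorem's expression $\log_B\big(\tfrac{3D^2}{1+2D}\big)\to\log_B\big(\tfrac32 D\big)$ carries the constant $\tfrac32$ inside the logarithm, so the constant-matching you anticipate (``$2c=\log_B(2/3)$'') fails: the two constants differ by $\log_B\big(\tfrac32 e^{3/2}\big)\approx\log_B(6.7)$. There is no exact combinatorial identity behind $\tfrac{3D^2}{1+2D}$. The paper instead splits $\sum_r(D-r+1)\log_B r=(D+1)\log_B D!-\sum_r r\log_B r$, upper-bounds $\log_B D!<(D+1)\log_B D$ via a deliberately loosened Stirling estimate --- discarding the $-D\log_B e$ term, which is precisely where your $e^{-3/2}$ would otherwise appear --- and upper-bounds $\sum_r r\log_B r\le P\log_B(Q/P)=\tfrac{D(D+1)}{2}\log_B\big(\tfrac{2D+1}{3}\big)$ by Jensen's inequality with $P=\sum_r r$ and $Q=\sum_r r^2=\tfrac{D(D+1)(2D+1)}{6}$; the quotient $Q/P=\tfrac{2D+1}{3}$ is the sole source of the $1+2D$ and the $3$, and the two pieces recombine as $2\log_B D-\log_B\tfrac{2D+1}{3}=\log_B\tfrac{3D^2}{1+2D}$.

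So to prove the statement as written you must reproduce the Stirling-plus-Jensen route; your integral route, done honestly, proves a genuinely different (and in fact smaller, sharper) constant inside the logarithm. To be fair to you, the paper's own derivation is only heuristic at this point --- it substitutes an upper bound for the \emph{subtracted} term $\sum_r r\log_B r$, which does not yield a one-sided bound, and the theorem is hedged with ``approximately'' --- and both forms agree to leading order $\log_B D$. Two smaller mismatches worth noting: the paper's proof starts from $V_r\approx 2(B-1)V_F\log_B D\log_B r$ (the $2(B-1)$ coming from the pre-order-traversal node count, not the $(2B-1)$ of Theorem~\ref{thm:hhvar}), and it drops the $+1$ terms from $h$ and $\alpha$ at the outset rather than absorbing them at the end; you treat both as routine, which is acceptable, but they should be stated as explicit approximations rather than derived.
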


\begin{proof}
   We obtain the bound by summing over all range lengths $r$.
   For a given length $r$, there are $D-r+1$ possible ranges. 
Hence,
 \begin{align*}
   \mathcal{E}_B  & \leq \frac{\sum_{r=1}^{D} V_{r}(D-r+1)}{\frack{D(D-1)}{2}} \\ &= \frac{(2(B-1)V_F \log_{B} D)\sum_{r=1}^{D} \log_B r (D-r+1)}{\frack{D(D-1)}{2}}   \\ &= \frac{2(B-1)V_F \log_B D \Big[ (D+1) \log_B (\prod_{r=1}^{D} r ) - \sum_{r=1}^{D}\log_B r^{r} \Big] }{\frack{D(D-1)}{2}} 
 \end{align*}
We find bounds on each of the two components separately.
%\begin{enumerate}

\smallskip
1. Using Stirling's approximation 
  we have \\
  $\log_B D! \leq \log_B (D^{(D+\frac{1}{2})}e^{1-D}) < (D+1)\log_B D$.

\smallskip
2. Writing $P=\sum_{r=1}^{D} r =\frack{D(D+1)}{2}$ and $Q =
  \sum_{r=1}^{D} r^{2} = \frack{D(D+1)(2D+1)}{6}$, we make use of
  Jensen's inequality to get
\begin{align*}
\sum_{r=1}^{D} r\log_B r &= P\sum_{r=1}^{D} \frac{r}{P} \log_B r \leq P \log_B ( \sum_{r=1}^{D} r\frac{r}{P})  \\ &= 
P \log_B (\frack{Q}{P})=\frack{D(D+1)}{2} \log_B \Big(\frack{1+2D}{3}\Big)
\end{align*}
%The first inequality above is due to Jenson's.
%\end{enumerate}
 Plugging these upper bounds in to the main expression, 
\begin{align*}
\mathcal{E}_B &
\textstyle  < \frac{2(B-1)V_F \log_B D\Big[(D+1)^{2}\log_B D -   \frac{D(D+1)}{2} \log_B \Big(\frac{1+2D}{3}  \Big)\Big]}{\frack{D(D-1)}{2}} \\
&=
\textstyle
2(B-1)V_F\log_B D \Big[\frac{2(D+1)^{2}\log_B D}{D(D-1)} -\frac{D+1}{D-1}\log_B \Big(\frac{1+2D}{3}\Big) \Big] \\ & \approx 2(B-1)V_F \log_B D \log_B \Big(\frac{ 3D^{2}}{1+2D} \Big) \text{as } D\to\infty. 
\end{align*}
\end{proof}

\para{Key difference from the centralized case}:
Similar looking bounds are known in centralized case, for example due to
Qardaji et al.~\cite{QYL:13}, but with some key differences.
There, the bound (simplified) is proportional to $(B-1)h^3 V_F$ rather
than the $(B-1)h^2 V_F$ we see here.
Note however that in the centralized case $V_F$ scales proportionate
to $1/N^2$ rather than $1/N$ in the local case: a necessary cost to
provide local privacy guarantees.

%{\bf{Comment (Tejas): I think we get one less log factor in the analysis than the centralized case because we use sampling. Can we get the same bound in the centralized case if we use sampling and don't split budget?}}

\para{Optimal branching factor for HH$_B$.} In general, increasing the fan-out
has two consequences under our algorithmic framework. Large $B$
reduces the tree height, which increases accuracy of estimation per
node since larger population is allocated to each level.
But this also means that we can  require more nodes at each level to evaluate a query
which tends to increase the total error incurred during evaluation.
We would like to find a branching factor that balances these two
effects.
We use the expression for the variance in \eqref{eq:hhvar}
to find the
optimal branching factor for a given $D$.
We first compute the gradient of the function $2(B-1)\log_B(r)\log_B(D)$.
Differentiating w.r.t. $B$ we get
\[\begin{array}{rl}
  \nabla&= \frac{D}{dB}\Big[\frac{2(B-1)\ln(D)\ln(r)}{\ln^{2}B}\Big] =
  2\ln D \ln r \frac{D}{dB} \Big[\frac{B-1}{\ln^{2}B}  \Big]
  \\ &= \frac{2\ln(D)\ln(r)}{(\ln^{2} B)^{2}} \big( \ln^{2} B\frac{D}{dB}[B-1]-
      (B-1)\frac{D}{dB}[\ln^{2}B] \big) 
  \\ &= \frack{2\ln D \ln r \big( \ln^{2}B - \frac{2}{B}(B-1)\ln B  \big)}{\ln^{4} B} \\&= \frack{2\ln D \ln r(B \ln B -2B +2)}{B \ln^{3} B}
\end{array}
\]
We now seek a $B$ such that the derivative $\nabla=0$.
The numerical solution is (approximately) $B=4.922$.
Hence we minimize the variance by choosing $B$ to be 4 or 5. 
This is again in contrast to the centralized case, where the optimal
branching factor is determined to be approximately $16$~\cite{QYL:13}.
%This means that that the tradeoffs are quite different: for a domain
%$D$ of size $2^{16}$, the average error is bounded by
%$(B-1)h^3 = 15\cdot 4^3 =960
%vs 384
%{\bf{Comment (Tejas): Same goes here. In Ninghui's paper, the opt bf is $\approx16$. In our case, it is 4. We might want to to mention this.}
%}

\subsection{Post-processing for consistency} 
\label{sec:consistency}
There is some redundancy in the information
materialized by the HH approach: we obtain estimates for the weight of
each internal
node, as well as its child nodes, which should sum to the parent weight.
Hence, the accuracy of the HH framework can be further improved 
by finding the \emph{least squares} solution for the weight of each node
taking into account all the information we have about it,
i.e.\ for each node $v$, we approximate the (fractional) frequency $f(v)$ with
$\widehat{f}(v)$ such that $|| f(v)-\widehat{f}(v)||_2$ is minimized subject
to the consistency constraints.
We can invoke the Gauss-Markov theorem since the variance of all our
estimates are equal, and hence the least squares solution is the
best linear unbiased estimator.
\begin{lemma}
The least-squares estimated counts reduce the associated variance by a
factor of at least $\frac{B}{B+1}$ in a hierarchy of fan-out $B$. 
  \label{lem:linearalg}
\end{lemma}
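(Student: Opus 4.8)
The plan is to exploit the optimality of the least-squares (consistent) solution that was just invoked via Gauss--Markov: since that solution is the minimum-variance linear unbiased estimator of every estimable linear functional — in particular of each node's frequency — it suffices to exhibit, for each node $v$, a single (possibly suboptimal) linear unbiased estimator of $f(v)$ whose variance already meets the target $\frac{B}{B+1}\sigma^2$, where $\sigma^2$ is the common variance of the raw per-node estimates. The least-squares estimate at $v$ can only do better, so its variance is at most $\frac{B}{B+1}\sigma^2$, i.e.\ a reduction by a factor of at least $\frac{B}{B+1}$.

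Concretely, I would first fix notation: write $z[v]$ for the raw unbiased estimate at node $v$, with $\var(z[v])=\sigma^2$ for all $v$ and distinct node estimates uncorrelated — this is exactly the hypothesis already used to apply Gauss--Markov (different levels are reported by disjoint user sub-samples, and within a level the frequency oracle gives uncorrelated estimates for distinct bins, as holds for OUE). Hence a sum of $k$ distinct raw estimates has variance $k\sigma^2$. Then I would split into two cases. (i) If $v$ is internal with children $c_1,\dots,c_B$, then $z[v]$ and $\sum_{j=1}^{B} z[c_j]$ are both unbiased for $f(v)$, with variances $\sigma^2$ and $B\sigma^2$ and uncorrelated; their inverse-variance-weighted combination is unbiased with variance $\bigl(\sigma^{-2}+(B\sigma^2)^{-1}\bigr)^{-1}=\frac{B}{B+1}\sigma^2$. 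This handles every internal node, the root included. (ii) If $v$ is a leaf with parent $u$ and siblings $s_1,\dots,s_{B-1}$, then $z[v]$ (variance $\sigma^2$) and $z[u]-\sum_{j=1}^{B-1} z[s_j]$ (variance $\sigma^2+(B-1)\sigma^2=B\sigma^2$) are both unbiased for $f(v)$ and uncorrelated, and the same inverse-variance weighting again yields variance $\frac{B}{B+1}\sigma^2$. Every node is of type (i) or (ii), which completes the argument; note the bound is one-sided, since nodes higher in the tree benefit more than this.

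The only real obstacle is the uncorrelatedness bookkeeping: one must verify that the raw estimate at $v$ is uncorrelated with each raw estimate entering the competing combination (the children in case (i); the parent and the siblings in case (ii)), so that variances add. In case (ii) this needs the parent estimate to be uncorrelated with the sibling-leaf estimates, which holds because they live at different levels populated by independent user sub-populations; beyond that the computation is the elementary inverse-variance-weighting identity $(a^{-1}+b^{-1})^{-1}=\frac{ab}{a+b}$ together with the already-cited Gauss--Markov theorem. An alternative, more constructive route is to analyse Hay et al.'s bottom-up weighted-averaging pass — its first step above the leaves produces precisely variance $\frac{B}{B+1}\sigma^2$ — but the Gauss--Markov argument is shorter and simultaneously covers the internal nodes further up the tree.
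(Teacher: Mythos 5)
Your proposal is correct and follows essentially the same route as the paper: both arguments exhibit, for each node, a local unbiased estimator built from that node together with its parent or children whose variance is $\frac{B}{B+1}V_F$, and then invoke the optimality of the global least-squares solution (Gauss--Markov) to conclude the bound transfers to the consistent estimates. The only real difference is presentational --- you derive the local $\frac{B}{B+1}V_F$ figure by inverse-variance weighting of two uncorrelated unbiased estimators, whereas the paper obtains the same number by explicitly computing $(H^TH)^{-1}$ for a one-level tree with $B$ leaves.
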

\begin{proof}
We begin by considering the linear algebraic formulation. 
%We can analyze the error via the least-squares formulation.
Let $H$ denote the $n \times D$ matrix that encodes the hierarchy,
where $n$ is the number of nodes in the tree structure.
For instance, if we consider a single level tree with $B$ leaves, then
$H = \left[ \begin{array}{c} \mathbf{1}_D \\ I_D \end{array}\right]$, where
$\mathbf{1}_D$ is the $D$-length vector of all 1s, and
$I_D$ is the $D\times D$ identity matrix. 
Let $\mathbf{x}$ denote the vector of reconstructed (noisy) frequencies of nodes.
 Then the optimal least-squares estimate of the true counts can be
 written as $\widehat{\mathbf{c}} = (H^TH)^{-1}H^T\mathbf{x}$. 
Denote a range query $R_{[a,b]}$ as the length $D$ vector that is 1
for indices between $a$ and $b$, and 0 otherwise.
Then the answer to our range query is
$R_{[a,b]}^T \widehat{\mathbf{f}}$.
  The variance associated with  query $R_{[a,b]}$ can be found as
\begin{align*}
  \var[R_{[a,b]}^T \widehat{\mathbf{c}}] &=
  \var[R_{[a,b]}^T (H^TH)^{-1}H^T\mathbf{x}] \\&=
  R_{[a,b]}^T(H^TH)^{-1}H^T \cov(\mathbf{x}) H((H^TH)^{-1})^T  R_{[a,b]}
    \\&=
  R_{[a,b]}^T  (H^TH)^{-1}H^T V_F I_D H((H^TH)^{-1})^T) R_{[a,b]}
  \\&= V_F R_{[a,b]}^T  (H^TH)^{-1}(H^T H)((H^TH)^{-1})^T) R_{[a,b]}
  \\&=V_F R_{[a,b]}^T (H^TH)^{-1} R_{[a,b]}
\end{align*}

First, consider the simple case when $H$ is a single level tree
with $B$ leaves.
Then we have
$H^T H = \mathbf{1}_{B\times B} + I_B$, where $\mathbf{1}_{B \times
  B}$ denotes the $B \times B$ matrix of all ones.
We can verify that
$(H^TH)^{-1} = ((B+1)I_B - \mathbf{1}_{B\times B})/(B+1)$.
From this we can quickly read off the variance of any range query.
For a point query, the associated variance is simply $B/(B+1)V_F$, 
while for a query of length $r$, the variance is
$(rB - r(r-1))/(B+1)V_F$. 
Observe that the variance for the whole range $r=B$ is just
$B/(B+1)V_F$, and that the maximum variance is for a range of just
under half the length, $r=(B+1)/2$, which gives a bound of

\centerline{$V_F(B+1)(B+1)/(4(B+1)) = (B+1)V_F/4$.}
%\end{proof}

The same approach can be used for hierarchies with more than one
level.
However, while there is considerable structure to be studied here,
there is no simple closed form, and forming $(H^TH)^{-1}$ can be
inconvenient for large $D$.
Instead, for each level, we can apply the argument above
between the noisy counts for any node and its $B$ children.
This shows that if we applied this estimation procedure to just these
counts, we would obtain a bound of $B/(B+1)V_F$ to any node (parent or
child), and at most $(B+1)V_F/4$ for any sum of node counts.
Therefore, if we find the optimal least squares estimates, their
(minimal) variance can be at most this much. 
\end{proof}

Consequently, after this constrained inference, the error variance at each node is at
most $\frac{BV_F}{B+1}$. 
It is possible to give a tighter bound for nodes higher up in the
hierarchy: the variance reduces by
$\frac{B^i}{\sum_{j=0}^i B^j}$ for level $i$ (counting up from level
1, the leaves).
This approaches $(B-1)/B$, from above; however, we adopt the
simpler $B/(B+1)$ bound for clarity. 

This modified variance affects the worst case error, and hence our
calculation of an optimal branching factor.
From the above proof, we can obtain a new bound on the worst case
error of
$(B+1)V_F/2$ for every level touched by the query (that is,
$(B+1)V_F/4$ for the left and right fringe of the query).
This equates to
$(B+1)V_F \log_B(r) \log_B(D)/2$ total variance.
Differentiating w.r.t. $B$, we find
\[
\begin{array}{rl}
\nabla &= \frac{d}{dB} \Big[ (B+1)\log_{B}(r)\log_B(D)V_F/2 \Big] \\ &=
\frack{\ln(r)\ln(D) (B\ln B - 2B -2)}{B\ln^3B}
%\frac{2V_F\ln(D)\ln(r)((B^{2}+2B-1)\ln(B)-2B^{2}+2)}{(B+1)^{2}\ln^{3}(B)}
\end{array}\]

\eat{
We apply the factor $\frac{B}{B+1}$ to the worst case bound
$2(B-1)\log_B(r) \log_{B}(D)V_F.$
Now, when we differentiate w.r.t. $B$ 
\begin{align*}
\nabla &= \frac{d}{dB} \Big[ \frac{2(B-1)\log_{B}(r)\log_B(D)BV_F}{B+1} \Big] \\ &=
\frac{2V_F\ln(D)\ln(r)((B^{2}+2B-1)\ln(B)-2B^{2}+2)}{(B+1)^{2}\ln^{3}(B)}
\end{align*}
}

Consequently, the value that minimizes $\nabla$ is $B\approx 9.18$ ---
larger than without consistency. 
This implies a constant factor reduction in the variance in range
queries from post-processing.
Specifically, if we pick $B=8$ (a power of 2), then
this bound on variance is
\begin{equation}
9 V_F\log_2(r) \log_2(D)/(2 \log_2^2 8) = \frac12 V_F\log_2(r)
\log_2(D),
\label{eq:hhvarbound}
\end{equation}
compared to $\frac74 V_F \log_2(r) \log_2(D)/4$ for HH$_4$ without
consistency. 
We confirm this reduction in error experimentally in
Section~\ref{sec:expts}.

We can make use of the structure of the hierarchy to provide a
simple linear-time procedure to compute optimal estimates. 
This approach was introduced in the centralized case by Hay et
al.\ \cite{HRMS:10}.
Their efficient two-stage process can be translated to the local
model. 
%One can easily find the $L_2$ approximation for count at each node by solving the regression problem but it is possible to provide an analytical solution by exploiting the hierarchical structure. 
% proposed two constrained inference techniques to improve the
% estimates for hierarchical histograms.
% {\bf{Comment: Hay's CI technique assumes that the variance is the same on all levels. This is indeed the case in HH framework for many FO premitives. This is why these recurrences for CI extends to the local case as they are. We need to mention this. }}
%The {\em constrained inference} approach is broken down into two
%stages, \emph{weighted averaging} and \emph{mean consistency}. 
%{\bf{Comment: Can we say that the improvement provided by CI step in the worst case variance is 
%$\frac{2(B-1)}{\frac{B+1}{2}}=\frac{4(B-1)}{B+1} \approx 4$?}}.

\para{Stage 1: Weighted Averaging}: Traversing the tree bottom up, we
use the weighted average of a node's original reconstructed frequency
$f(.)$ and the sum of its children's (adjusted) weights to update the node’s reconstructed weight.
For a non-leaf node $v$, its adjusted weight is a weighted combination
as follows: 
\begin{align*}
  \textstyle
 \bar{f}(v)= \frac{B^{i} - B^{i-1}}{B^{i}-1} f(v)  + \frac{B^{i-1}-1}{B^{i}-1} \sum_{u \in \text{child}(v)} \bar{f}(u) 
\end{align*}
%The weights assigned to both the counts are inversely proportional to their individual variances and this choice is based on the following fact.

\noindent
{\bf Stage 2: Mean Consistency}:
This step makes sure that for each node, its weight is equal to the
sum of its children's values.
This is done by
dividing the difference between parent's weight and children's total weight equally among children. 
For a non-root node~$v$, 
\begin{align*}
\textstyle
  \widehat{f}(v)=  \bar{f}(v)  + \frac{1}{B}\Big[\bar{f}(p(v)) -\sum_{u \in \text{child}(v)} \bar{f}(u)\Big]
\end{align*} 
where $\bar{f}(p(v))$ is the weight of $v$'s parent after weighted averaging.
The values of $\widehat{f}$ achieve the minimum $L_2$ solution.  

\eat{
\begin{fact}
The variance of weighted combination $\gamma \theta_1 + (1-\gamma)\theta_2$ for two r.v.s $\theta_1,\theta_2$ is minimized when $\gamma=\frac{\var[\theta_1]}{\var[\theta_1]+\var[\theta_2]}$. The minimum variance is  $\frac{\var[\theta_1]\var[\theta_2]}{\var[\theta_1]+\var[\theta_2]}.$
\end{fact}

%Using this result, one can easily prove the gain due to weighted averaging step.
%\begin{lemma} \cite{QYL:13}
%The weighted averaging step reduces the variance at level $l$ to $\bar{V}_{l}=\frac{B^{l-1}V_l}{\sum_{i=l-1}^{h}B^i} \leq \frac{BV_{F}}{B+1}$.
%\end{lemma}

%\begin{lemma}\cite{QYL:13}
%After the consistency step, the variance of each node is $\ge \frac{B-1}{B+1}V$, where %$V$ is the variance of a node before the consistency step.
%\end{lemma}

\begin{proof}
  Our proof technique is similar to the one presented in \cite{QYL:13}. The CI step correlates the counts at different levels and each node's count is a weighted sum of all the counts in the hierarchy. Therefore, the variance at any node v is a weighted combination of contribution  coming from its descendents and from the rest of the hierarchy. The upper bound of the best estimate obtained from the successors is $BV^{\downarrow}_{l} \leq B\times \frac{BV_F}{B+1}$.
\par Let rest of the hierarchy has variance $V^{\uparrow}_{l}$. This estimate is the weighted average of the noisy count at the v, which has variance $V_F$, and the noisy count obtained by subtracting the sum of its siblings from the the best estimate of its parent with no information from its successors, which has variance $V^{\uparrow}_{l-1}+(B-1)V^{\downarrow}_l$.		This means $V^{\uparrow}_l=\frac{V_F(V^{\uparrow}_{l-1}+(B-1)V^{\downarrow}_{l})}{V_F+V^{\uparrow}_{l-1}+(B-1)V^{\downarrow}_{l}}$ 
We bound $V^{\downarrow}_{l} \leq \frac{BV_F}{B+1}$ and $V^{\uparrow}_{l-1} \leq V_F$ (which is in fact the case at the root node.). Now  Then we have $V^{\uparrow}_{l}$,  
$\leq \frac{V_F(V_F+(B-1)\frac{BV_F}{B+1})}{V_F+V_F + \frac{(B-1)BV_F}{B+1}}=\frac{V_F(\frac{B+1+B^2-B)}{B+1})}{\frac{2B+2+B^{2}-B)}{B+1}}=\frac{(B^{2}+1)V_F}{B^{2}+B+2} < V_F.$
Therefore, variance of a node v after CI step is $\frac{BV^{\downarrow}_{l-1}V^{\uparrow}_l}{BV^{\downarrow}_{l-1}+V^{\uparrow}_l} < \frac{\frac{V_FB^{2}V_F}{B+1}}{\frac{V_FB^{2}}{B+1}+V_F} = \frac{V_FB^{2}}{B^{2}+B+1} < \frac{B^{2}V_F}{B^{2}+B}=\frac{BV_F}{B+1}$
\end{proof}
}

Finally, we note that the cost of this post-processing is relatively
low for the aggregator: each of the two steps can be computed in a
linear pass over the tree structure. 
A useful property of finding the least squares solution is that it
enforces the consistency property: the final estimate for each node is
equal to the sum of its children.
Thus, it does not matter how we try to answer a range query (just
adding up leaves, or subtracting some counts from others) --- we will
obtain the same result. 

\para{Key difference from the centralized case.}
Our post-processing is  influenced by a sequence of papers in
the centralized case.
However, we do observe some important points of departure.
First, because users sample levels, we work with the distribution of
frequencies across each level, rather than counts, as the counts are
not guaranteed to sum up exactly.
Secondly, our analysis method allows us to give an upper bound on the
variance at every level in the tree -- prior work gave a mixture of
upper and lower bounds on variances.
This, in conjunction with our bound on covariances allows us to give a
tighter bound on the variance for a range query, and to find a bound
on the optimal branching factor after taking into account the
post-processing, which has not been done previously. 

\begin{figure}[t]
  \centering
  \scriptsize
\[\frac{1}{\sqrt{8}}\begin{pmatrix*}[r]
 1  & 1 &   \sqrt{2}  &  0  &    2 &  0   &   0  &    0  \\  
  1 & 1 &   \sqrt{2}  &  0  &  -2  &  0  &   0  &   0   \\
  1 &  1 &  -\sqrt{2} &  0   &    0  &  2  &   0  &   0   \\
  1 & 1 &   -\sqrt{2} &  0   &   0   & -2  &   0  &  0   \\
  1 &-1 &   0    &  \sqrt{2} &   0   &  0   &  2  &  0   \\
  1 &-1 &   0    &  \sqrt{2} &   0   &  0   & -2  &  0   \\
  1 &-1 &   0    & -\sqrt{2} &   0   &  0   &  0   &   2  \\
  1 &-1 &   0    & -\sqrt{2} &   0   &  0   &  0   &  -2  \\

 \end{pmatrix*}\]
\caption{DHT matrix for $D=8$.}
%Each row $i$ represents the Haar coefficients for $i$.}
 \label{fig:haar}
\end{figure}

\subsection{Discrete Haar Transform (DHT)}
\label{sec:dht}
%(Under Construction)

The Discrete Haar Transform (DHT) provides an alternative approach to
summarizing data for the purpose of answering range queries. 
%Along with HH approach, another transform that can be naturally expressed in a hierarchical way is the discrete Haar transform (DHT).   
DHT is a popular data synopsis tool that relies on a hierarchical
(binary tree-based) decomposition of the data.
DHT can be understood as performing recursive pairwise averaging and differencing of
our data at different granularities, as opposed to the HH approach
which gathers sums of values. 
%This approach differs from previous $B$-adic approach in the manner
%it populates the tree.
The method imposes a full binary tree structure over the domain, where
$h(v)$ is the height of node $v$, counting up from the leaves (level
0). 
The Haar coefficient $c_v$ for a node $v$ is computed as
$c_v=\frac{C_l - C_r}{2^{h(v)/2}}$, where $C_l,C_r$ are the sum of counts of
all leaves in the left and right subtree of $v$.
In the local case when $z_i$ represents a leaf of the tree, 
there is exactly one non-zero haar coefficient at each level $l$ with
value $\pm\frac{1}{2^{l/2}}$.
The DHT can also be represented as a matrix $H_D$ of dimension $D \times D$
(where $D$ is a power of 2) with each row $j$ encoding the Haar
coefficients for item $j \in [D]$. Figure~\ref{fig:haar} shows an
instance of DHT matrix for $D=8$, i.e. $H_8$.
In each row, the zeroth coefficient is always $\frac{1}{\sqrt{D}}$.
The subsequent $D-1$ entries give weights for each of $h=\log_2 D$ levels.
For instance, for row 0, $\{\frac{1}{\sqrt{2}},0,0,0\}$, $\{\frac12,0\}$ and $\{\frac{1}{\sqrt{8}}\}$ are the sets for coefficients for levels 1, 2 and 3. 
We can decode the count at any leaf node $v$ 
by taking the inner product of the vector of Haar coefficients with the row
of $H_D$ corresponding to $v$. 
Observe that we only need $h$ coefficients to answer a point query. 
%as follows. 
%\begin{align*}
%\frac{N}{D} +  \sum_{u \in P_v} (w_{u} \cdot c_u), w_{u} \in \{-1,1\}
%\end{align*}
% where $P_v$ is the set of nodes in the path from the root to node $v$ and $w_u$ is $-1/1$ when node $u$ is the left/right child of its parent. $\frac{N}{D}$ is the value of 0th Haar coefficient. 
 
\para{Answering a range query.}
A similar fact holds for range queries. 
We can answer any range query by first summing all rows of $H_D$ that
correspond to leaf nodes within the range, then taking the inner
product of this with the coefficient vector.
We can observe that for an internal node in the binary tree, if it is fully contained (or fully excluded) by the range, then it contributes zero to the sum.
Hence, we only need coefficients corresponding to nodes that are cut
by the range query: there are at most $2h$ of these.
The main benefit of DHT comes from the fact that all coefficients are
independent, and there is no redundant information. 
Therefore we obtain a certain amount of consistency by design: any set
of Haar coefficients uniquely determines an input vector, and there is
no need to apply the post-processing step described in
Section~\ref{sec:consistency}.

%Any internal node $u$ contributes to a range $[a,b]$ of length $r$ when $u \in P_a \cup P_b$. Let $L_u(R_u)$ be the set of leaves in the left (right) subtree rooted at $u$. Then $[a,b]$ is evaluated as  
% \begin{align*}
% r.c_0 + \sum_{u \in P_a \cup P_b,u \neq 0} c_u \times (|L_{u}  \cap [a,b]| -|R_{u}  \cap [a,b]| )
% \end{align*}
% 
% Clearly, nodes whose subtree is completely contained in the range have resultant contribution of zero. 

\para{Our algorithmic framework.}
For convenience, we rescale each coefficient reported by a user
at a non-root node to be from $\{-1,0,1\}$, and apply the scaling
factor later in the procedure. 
%since the scaling factor is the same for each non-zero coefficient on
%the same level.
%This representation prepares us to design a DHT based method for
%aggregating histograms complying LDP.
Similar to the HH approach, each user samples a level $l$ with
probability $p_l$ and perturbs the coefficients from that level using a
suitable perturbation primitive.
Each user then reports her noisy coefficients along with the level.
The aggregator, after accepting all reports, prepares a similar tree
and applies the correction to make an unbiased estimation of each Haar
coefficient.
The aggregator can evaluate range queries using the (unbiased but still
noisy) coefficients.

\para{Perturbing Haar coefficients.}
As with hierarchical histogram methods, where each level is a sparse
(one hot) vector, there are several choices for how to release
information about the sampled level in the Haar tree.
The only difference is that previously the non-zero entry in the level
was always a 1 value; for Haar, it can be a $-1$ or a $1$.
There are various straightforward ways to adapt the methods that we
have already (see, for example, \cite{sb:15,Xiao:16,telemetry:17}). 
We choose to adapt the Hadamard Randomized Response (HRR) method,
described in Section~\ref{sec:hist}. 
First, this is convenient: it immediately works for negative
valued weights without any modification.
But it also minimizes the communication effort for the users: they
summarize their whole level with a single bit (plus the description of
the level and Hadmard coefficient chosen).
We have confirmed this choice empirically in calibration experiments
(omitted for brevity): HRR is consistent with other choices in terms of
accuracy, and so is preferred for its convenience and compactness. 

%We have multiple choices for perturbing (discrete) Haar coefficients
%such as GRR, Laplace mechanism and a variant of 1 bit RR from
%etc.
%Though these mechanisms are designed to perturb a single value, they
%can adapted to perturb items on a sparse array by splitting the budget
%into two.
%However, such schemes would involve perturbing and reporting all
%coefficients on a sampled level which is communication wise
%prohibitive. One can also perturb a sampled coefficient but this
%strategy in sparse cases is unlikely to provide an acceptable
%utility.
%Therefore, choosing a good perturbation primitive for haar coefficients is not obvious. 
%\para{Hadamard transform of Haar coefficients.}

Recall that the (scaled) Hadamard transform of a sparse binary vector $e_i$
is equivalent to selecting the $i$th row/column from the Hadamard
matrix.
When we transform $-e_i$, the Hadamard coefficients remain
binary, with their signs negated.
Hence we use HRR for perturbing levelwise Haar coefficients.
At the root level, where there is a single coefficient, this is
equivalent to  1 bit RR.
The 0th coefficient can be hardcoded to $\frac{N}{\sqrt{D}}$ since it
does not require perturbation. 
We refer to this algorithm as HaarHRR.

\para{Error behavior for HaarHRR.}
As mentioned before, we answer an arbitrary query of length $r$ by
taking a weighted combination of at most $2h$ coefficients.
A coefficient $u$ at level $l$ contributes to the answer if and only
if the leftmost and  rightmost leaves of the subtree of node $u$
partially overlaps with the range.
The 0th coefficient is assigned the weight $r$.
Let $O^{L}_{l}$ ($O^{R}_{l}$) be the size of the overlap sets for left (right)
subtree for $u$ with the range. Using reconstructed coefficients, we evaluate a query to produce answer $\widehat{R}$ as below.  
\[ 
%\begin{array}{rl}
\widehat{R}= r c_{0} + 2\sum_{l= 1}^{h} \Big( \frac{O^{L}_{l} - O^{R}_{l}}{2^{l}} \Big) \widehat{c}_{l}
%\end{array}
\]
\noindent
where, $\widehat{c}_l$ is an unbiased estimation of a coefficient
$c_l$ at level $l$.
In the worst case, the absolute weight $|O^{L}_{l} -
O^{R}_{l}| = 2^{l}$.
We can analyze the corresponding varance, $V_{r}$, as 
\[
\begin{array}{rl}
V_{r} &\leq 2 \sum_{l=1}^{h} \Big(\frac{2^{l}}{2^{l+1}}\Big)^{2} \var[\widehat{c}_l] = \sum_{l=1}^{h}\frac{1}{2} \var[\widehat{c}_l] \\ 
&=  \frac{1}{2}\sum_{l=1}^{h} \frac{V_F}{p_l}
\end{array}
\]

Here, $V_F$ is the variance associated with the HRR frequency oracle.
As in the hierarchical case, the optimal choice is to set $p_l = 1/h$
(i.e. we sample a level uniformly), where $h = \log_2 (D)$.
Then we obtain
\begin{equation} \textstyle V_r = \frac12 \log^{2}_2(D) V_F
  \label{eq:haarvar}
  \end{equation}

%\var[\widehat{c}_0]  +  \sum_{l=1}^{h-1}\var[\widehat{c}_l])=
%\frac{\var[\widehat{c}_0]  + h(h-1)V_F}{2} \\ &= \frac{hV_{F}+
%  h(h-1)V_{F}}{2} =\frac{h^{2}V_F}{2}
%\end{array}
%\]
%$\var[\widehat{c}_0]=\frac{4e^{\epsilon}}{N(e^{\epsilon}-1)^{2}}$ above is the variance of 1 bit RR and it is same as $V_F$, variance per node for HRR$1$.
It is instructive to compare this expression with the bounds obtained
for the hierarchical methods.
Recall that, after post-processing for consistency, we found that the
variance for answering range queries with HH$_8$, based on optimizing
the branching factor, is
$\log_2(r)\log_2(D) V_F/2$ (from \eqref{eq:hhvarbound}).
That is, for long range queries where $r$ is close to $D$, these
\eqref{eq:haarvar} will be close to \eqref{eq:hhvarbound}.
Consequently, we expect both methods to be competitive, and will use
empirical comparison to investigate their behavior in practice. 

%Compare this quantity with HH$_2$'s expression after applying
%consistency which is at most $\frac{3 h^{2}V_F}{2}$.
%This means overall, we expect HaarHRR to be more accurate than consistent version of %HH$_2$.

Finally, observe that since this bound does not depend on the range
size itself, the average error across all possible range queries is
also bounded by \eqref{eq:haarvar}. 

%Last, we state a bound for the average error, across all possible
%range queries.
%\begin{theorem}
%The worst case average (squared) error incurred $\mathcal{E}$ is (approximately) %$\frac{V_{F}h^{2}}{2}$.
%\end{theorem}
%\begin{proof}
%Similar to previous proofs, we sum across all range lengths $r$.
%\[
%\begin{array}{rl}
% \mathcal{E} &\leq \frac{\sum_{r=1}^{D} V_{r} (D-r+1)}{D(D-1)/2} 
% = \frac{h^{2}V_F(D(D+1)) - D(D+1)/2}{D(D-1)} \\ &=\frac{h^{2}V_F((D+1)- \frac{D+1}{2})}{D-1}   = \frac{h^{2}V_F(D+1)}{2(D-1)} \approx \frac{h^{2}V_F}{2} (\text{as } D \rightarrow \infty)
% \end{array}
%\]
%\end{proof}

\para{Key difference from the centralized case.}
The technique of perturbing Haar coefficients to answer differentially
private range queries was proposed and studied in the centralized case
under the name ``privelets''~\cite{dht:11}.
Subsequent work argued that more involved centralized algorithms could
obtain better accuracy. 
%However, DHT was quickly superseded by more accurate and involved
%techniques and never deemed as a preferred method.
We will see in the experimental section that HaarHRR is among our best
performing methods.
Hence, our contribution in this work is to reintroduce the DHT as a
useful tool in local privacy. 
%In fact, one of the novel contributions of this work lies in recognizing DHT's value in the local case and establishing it among the preferred methods.  

\subsection{Prefix and Quantile Queries}
\label{sec:quantiles}

Prefix queries form an important class of range queries, where the
start of the range is fixed to be the first point in the domain.
The methods we have developed allow prefix queries to be answered as a
special case.
Note that for hierarchical and DHT-based methods, we expect the error
to be lower than for arbitrary range queries.
Considering the error in hierarchical methods
(Theorem~\ref{thm:hhvar}), we require at most $B-1$ nodes at each
level to construct a prefix query, instead of $(2B-1)$, which reduces
the variance by almost half.
For DHT similarly, we only split nodes on the right end of a prefix
query, so we also reduce the variance bound by a factor of 2.
Note that a reduction in variance by 0.5 will translate into a factor
of $\sqrt{2} = 0.707$ in the absolute error.
Although the variance bound changes by a constant factor, we obtain
the same optimal choice for the branching factor in $B$.
%Nevertheless, we 

Prefix queries are sufficient to answer quantile queries.
The $\phi$-quantile is that index $j$ in the domain such that at most
 a $\phi$-fraction of the input data lies below $j$, and at most a
$(1-\phi)$ fraction lies above it.
If we can pose arbitrary prefix queries, then we can binary search for
a prefix $j$ such that the prefix query on $j$ meets the
$\phi$-quantile condition.
Errors arise when the noise in answering prefix queries causes us
to select a $j$ that is either too large or too small.
The quantiles then describe the input data distribution in a general
purpose, non-parametric fashion. 
Our expectation is that our proposed methods should allow more
accurate reconstructions of quantiles than flat methods, since we
expect they will observe lower error. 
We formalize the problem: 

\begin{definition}(Quantile Query Release Problem)
 Given a set of $N$ users, the goal is to collect information
 guaranteeing $\epsilon$-LDP to approximate any quantile $q \in
 [0,1]$.
 Let $\widehat{Q}$ be the item returned as the answer to the quantile
 query $q$ using a mechanism $F$, which is in truth the $q'$ quantile, and
 let $Q$ be the true $q$ quantile.
% estimated CDF of true CDF $Q$ computed using
% a mechanism $F$.
We evaluate the quality of $F$  by both the {\em value error}, measured
by the squared error $(\widehat{Q}-Q)^{2}$; and the
{\em quantile error} $|q - \widehat{q}|$.
 \end{definition}

%Since the quantile computing problem reduces to the problem of answering prefix/cdf queries, we can use the HH framework to find quantile queries.  
%\end{align*}
 %\subsection{2D Queries}
%\subsection{Effect Of Dimensionality}
\eat{

\begin{figure*}
  %\begin{multicols}{2}
  \centering
\subfigure[Range Queries]{
   \includegraphics[scale=0.2]{figs/avg_err.png}\par
   
  \label{fig:mse_rq}
}

\subfigure[Prefix Queries]{
   \includegraphics[scale=0.2]{figs/avg_err_pre.png}\par
   
  \label{fig:mse_pq}
  %    \end{multicols}
}
\caption{ The mean squared error incurred while answering all ${D \choose 2} $ queries and only  prefix queries as a function of $\epsilon$.}  
\label{fig:mse}
\end{figure*}

 }

% \para{Error Estimation For Quantile Queries:}
 
\begin{figure*}[t]
    \includegraphics[width=\linewidth]{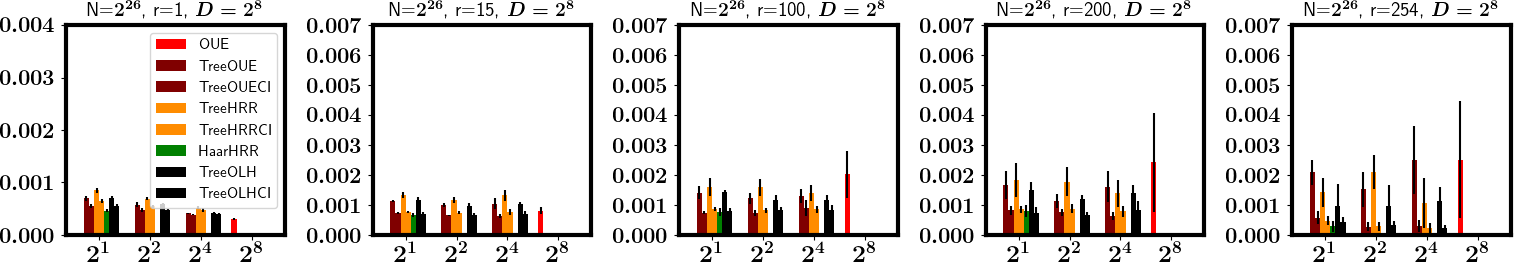}\par
    \includegraphics[width=\linewidth]{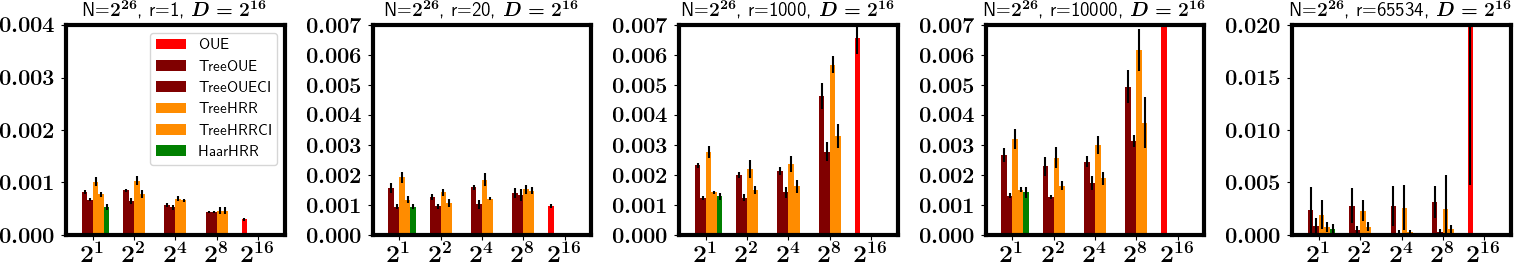}\par 
    \includegraphics[width=\linewidth]{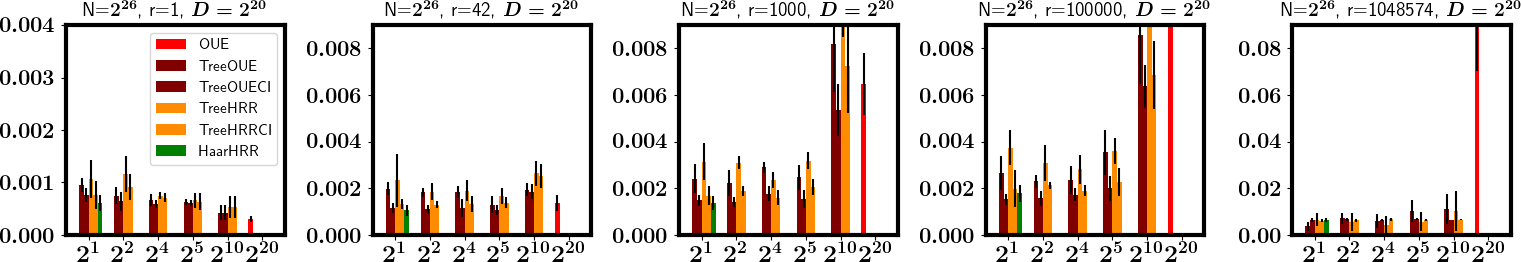}\par
    \includegraphics[width=\linewidth]{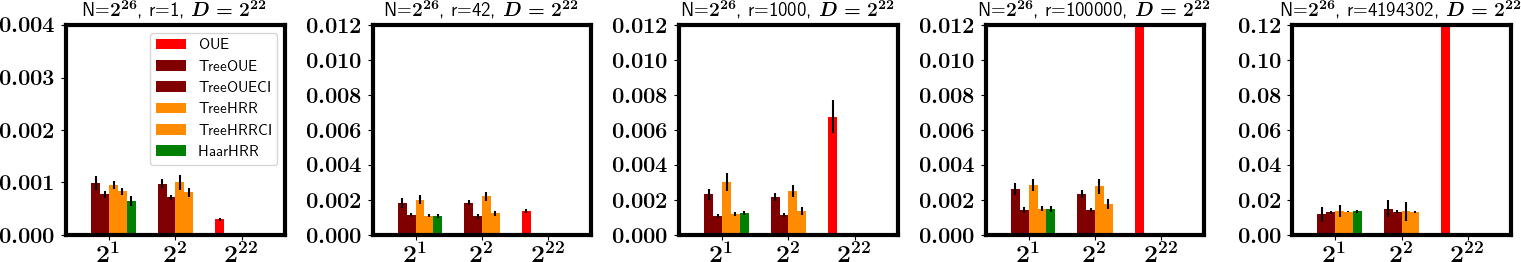}\par

\caption{Impact of constrained inference and branching factor $B$. In
  each plot, $B$ increases along the $X$ axis, and the $Y$ axis plots the mean squared error incurred in answering all range queries of length $r$.}
\label{fig:ci_b}
\end{figure*}

\section{Experimental Evaluation}
\label{sec:expts}
Our goal in this section is to validate our solutions and theoretical claims with 
experiments.

\para{Dataset Used.}
We are interested in comparing the flat, hierarchical and wavelet methods
for range queries of varying lengths on large domains,
capturing meaningful real-world settings. 
We have evaluated the methods over a variety of real and synthetic
data.
Our observation is that measures such as speed and accuracy do not
depend too heavily on the data distribution.
Hence, we present here results on synthetic data sampled from  Cauchy
distributions.
This allows us to easily vary parameters such as the population size
$N$ and the domain size $D$, as well as varying the distribution to be
more or less skewed. 

The shape of the (symmetrical) Cauchy distribution is
controlled by two parameters, center and height.
We set the location of the center at $P\times D$, for $0 < P < 1$, so that 
larger values of $P$ shift the mass further to the right.  
Since Cauchy distribution has infinite support, we drop any values
that fall outside $[D]$. 
Larger height parameters tend to reduce the sparsity in the
distribution by flattening it.
Our default choice is a relatively spread out distribution with height
= $\frac{D}{10}$ and $P=0.4$.\footnote{Our experiments varying $P$ did
  not show much significant deviation in the trends observed.}
We vary the domain size $D$ from small ($D=2^8$) to large
($D=2^{22}$) as powers of two. 
%  Datasets generated using other non extreme values of
%  P did not significant deviation in the trends observed.}
%This is because private reconstruction of overly skewed histograms is
%anyway likely to have a high variance due to large number of empty
%cells in the original histogram irrespective of the method used.
%Accurately answering range queries on \emph{heavy tailed}
%distributions requires a seperate investigation and is outside the
%scope of this work.

%Methods proposed for \emph{heavy hitters} estimation can be used as a baseline.  
%For thoroughness, we also include experiments with varying centers. In those experiments, we vary the center by changing P from 0.0 to 1.0. 

\para{Algorithm default parameters and settings.}
We set a default value of $e^{\epsilon}=3$ ($\epsilon=1.1$),
in line with prior work on LDP.
This means, for example, that binary randomized response will report a
true answer $\frac34$ of the time, and lie $\frac14$ of the time ---
enough to offer plausible deniability to users, while allowing
algorithms to achieve good accuracy. 
%wherever we don't vary $\epsilon$ because it is generally an
%acceptable value and offers a fair opportunity for algorithms that
%perform significantly better at $\epsilon$'s in high utility regimes.
%We generate a relatively well spread dataset with
%P$=0.4$ and choose D to be a power of 2.
%
Since the domain size $D$ is chosen to be a power of 2, we can choose 
a range of branching factors $B$ for hierarchical histograms so that
$\log_B (D)$ remains an integer.
The default population size $N$ is set to be $N = 2^{26}$ which captures the scenario of an industrial deployment, similar to \cite{rappor:15,snap:18,applewhitepaper:17}.
%for robust results due to well-known close relationship between D and
%$N$.
Each bar plot is the mean of 5 repetitions of an experiment and error
bars capture the observed standard deviation.
The simulations are implemented in C++ and tested on a standard Linux
machine.
To the best of our knowledge, ours is among the first non-industrial
work to provide simulations with domain sizes as large as
$2^{22}$.
Our final implementation will shortly be made available as open source.

\para{Sampling range queries for evaluation.}
When the domain size is small or moderate ($D=2^{8}$ and $2^{16}$), it
is feasible to evaluate all ${D \choose 2 }$ range queries and so
compute the exact average. 
However, this is not scalable for larger domains, and so we average
over a subset of the range queries.
To ensure good coverage of different ranges, we pick a set of
evenly-spaced starting points, and then evaluate all ranges that begin
at each of these points. 
%However, we have to pick a subset of queries for $D=2^{20}$ and $2^{22}$ for scalability while ensuring a good coverage for various length sizes. In our initialy experiments, we observed that the mechanisms designed for sampling queries uniformly without replacement added a significant computational/memory overhead. Therefore, we adopt the following simple mechanism.   
%We fix a sequence of left ends observing arithmetic progression. For
%each left end $l$, we evaluate all possible $D-l+1$ queries. The
%common differences used in the successive end points for
For $D=2^{20}$ and $2^{22}$ we pick start points every $2^{15}$ and
$2^{16}$ steps, respectively, yielding a total of 17M and 69M unique queries. 
%, providing us total 17,301,472 and 69,205,984 unique queries respectively. 

\para{Histogram estimation primitives.}
The HH framework in general is agnostic to the choice of the histogram
estimation primitive $F$.
We show results with OUE, HRR and OLH as the primitives for histogram
reconstruction, 
since they are considered to be state of art~\cite{Wang:Blocki:Li:Jha:17}, and all provide the same theoretical bound
$V_F$ on variance.
Though any of these three methods can serve as a flat method, we
choose OUE as a flat method since it can be simulated efficiently and
reliably provides the lowest error in practice by a small margin. 
We refer to the hierarchical methods using HH framework as TreeOUE,
TreeOLH and TreeHRR.
Their counterparts where the aggregator applies postprocessing to
enforce consistency are identified with the CI suffix,
e.g. TreeHRRCI. 
%, and TreeOUECI, TreeHRRCI are the corresponding methods after applying the post-processing step ensuring consistency.

We quickly observed in our preliminary experiments that direct
implementation of OUE can be very slow for large $D$:
the method perturbs and reports $D$ bits for each user. 
For accuracy evaluation purposes, we can replace the slow method with a
statistically equivalent simulation. 
That is, we can simulate the aggregated noisy count data that the
aggregator would receive from the population. 
We know that noisy count of any item is aggregated from two
distributions (1) ``true'' ones that are reported as ones (with
prob. $\frac{1}{2}$) (2) zeros that are flipped to be ones (with
prob. $\frac{1}{1+e^{\epsilon}}$).
Therefore, using the (private) knowledge of the true count $\theta[j]$
of item $j \in [D]$, the noisy count $\theta^{*}[j]$ can be expressed
as a sum of two binomial random variables, 
%\centerline{$
{$
\theta^{*}[j]=\text{Bino}(\theta[j],0.5) +
\text{Bino}\Big(N-\theta[j],\frac{1}{1+e^{\epsilon}}\Big).
$}
%\end{align*}
Our simulation can perform this sampling for all items, then provides the sampled
count to the aggregator, which then performs  the usual bias correction procedure.

The OLH method suffers from a more substantial drawback: the method is
very slow for the aggregator to decode, due to the need to iterate
through all possible inputs for each user report (time $O(ND)$).
We know of no short cuts here, and so we only consider OLH for our
initial experiments with small domain size $D$. 
%Since OLH's decoding time is proportional to $N\times D$ and has accuracy similar to OUE, we include it only in our initial experiments. 

\begin{figure*}[t]
\centering
\scalebox{0.8}{
\subfigure[$D=2^{8}$]{
\begin{tabular}{lrrrr}
\toprule
{$\epsilon$} &  HH$^{c}_{2}$ &  HH$^{c}_{4}$ &  HH$^{c}_{16}$ &   HaarHRR \\
\midrule
{0.2} &        4.269 &        4.037 &         4.176 &    \textbf{3.684} \\
{0.4} &        2.024 &        2.193 &         2.590 &    \textbf{1.831} \\
{0.6} &        1.388 &        1.341 &         1.535 &    \textbf{1.278} \\
0.8 &        1.002 &        \textbf{0.950} &         1.130 &    0.987 \\
1.0 &        0.844 &        \textbf{0.744} &         0.844 &    0.811 \\
1.1 &        0.722 &        \textbf{0.667} &         0.820 &    0.748 \\
1.2 &        0.684 &        0.658 &         \textbf{0.642} &    0.732 \\
1.4 &        0.571 &        \textbf{0.542} &         0.592 &    0.601 \\
\bottomrule
\end{tabular}
}
\subfigure[$D=2^{16}$]{
\begin{tabular}{lrrrr}
\toprule
{$\epsilon$} &  HH$^{c}_{2}$ &  HH$^{c}_{4}$ &  HH$^{c}_{16}$ &   HaarHRR \\
\midrule
{0.2} &   6.745 &        7.129 &         8.692 &    \textbf{6.666} \\
0.4 &        3.616 &        \textbf{3.424} &         4.648 &    3.526 \\
0.6 &        \textbf{2.333} &        2.360 &         2.793 &    2.342 \\
0.8 &        \textbf{1.644} &        1.728 &         2.075 &    1.711 \\
1.0 &        \textbf{1.356} &        1.377 &         1.642 &    1.484 \\
1.1 &        1.303 &        \textbf{1.270} &         1.597 &    1.345 \\
1.2 &        \textbf{1.090} &        1.140 &         1.433 &    1.201 \\
1.4 &        \textbf{0.922} &        0.995 &         1.158 &    1.130 \\
\bottomrule
\end{tabular}
}
%\subfigure[$D=2^{18}$]{
%
%\begin{tabular}{lrrrr}
%\toprule
%  HH$_{2}$ &  HH$_{4}$ &  HH$_{16}$ &   HaarHRR \\
%\midrule
%      8.413 &        8.190 &        7.700 &    7.196 \\
%      4.127 &        4.433 &        4.563 &    3.712 \\
%      2.623 &        2.505 &        2.792 &    2.412 \\
%      2.008 &        2.146 &        2.335 &    1.921 \\
%      1.597 &        1.464 &        1.667 &    1.608 \\
%      1.405 &        1.607 &        1.532 &    1.576 \\
%      1.473 &        1.520 &        1.513 &    1.481 \\
%      1.216 &        1.341 &        1.348 &    1.281 \\        \bottomrule
%\end{tabular}
%}
%
\subfigure[$D=2^{20}$]{
\begin{tabular}{lrrrr}
\toprule
{$\epsilon$} &  HH$^{c}_{2}$ &  HH$^{c}_{4}$ &  HH$^{c}_{16}$ &   HaarHRR \\
\midrule
{0.2}&      10.043 &       10.493 &        11.511 &    \textbf{9.285} \\
{0.4}&       5.378 &        \textbf{4.751} &         5.617 &    5.261 \\
0.6&       3.605 &        \textbf{3.603} &         4.483 &    3.693 \\
0.8&       3.047 &        \textbf{3.042} &         3.352 &    3.316 \\
1.0&       \textbf{2.522} &        2.690 &         3.131 &    2.915 \\
1.1&       2.556 &        \textbf{2.540} &         2.729 &    2.722 \\
1.2&       2.619 &        \textbf{2.488} &         2.757 &    2.640 \\
1.4&       2.339 &        \textbf{2.304} &         2.652 &    2.505 \\      \bottomrule
\end{tabular}
}
\subfigure[$D=2^{22}$]{
\begin{tabular}{lrrrr}
\toprule
{$\epsilon$} &  HH$^{c}_{2}$ &  HH$^{c}_{4}$  &   HaarHRR \\
\midrule
{0.2} &        8.629 &        8.889 &    \textbf{8.422} \\
{0.4} &        4.546 &        4.951 &    \textbf{4.470} \\
{0.6} &        3.181 &        3.420 &    \textbf{3.085} \\
{0.8} &        2.657 &        2.692 &    \textbf{2.462} \\
1.0 &        \textbf{2.247} &        2.358 &    2.254 \\
1.1 &        \textbf{1.979} &        2.252 &    2.139 \\
{1.2} &        2.120 &        2.066 &    \textbf{1.946} \\
1.4 &        \textbf{1.650} &        1.885 &    1.990 \\
\bottomrule
\end{tabular}
}
}
\caption{Impact of varying $\epsilon$ on mean squared error for arbitrary queries. These numbers are scaled up by 1000 for presentation.}
\label{tab:mse_range}
%\end{figure*}
%\begin{figure*}[t]
\centering
\scalebox{0.8}{

\subfigure[$D=2^{8}$]{
\begin{tabular}{lrrrr}
\toprule
{$\epsilon$} &  HH$^{c}_{2}$ &  HH$^{c}_{4}$ &  HH$^{c}_{16}$ &   HaarHRR \\
\midrule
{0.2} &        4.306 &        \underline{2.968} &         4.282 &    \underline{\textbf{2.857}} \\
{0.4} &        \underline{1.859} &        \underline{1.439} &         \underline{1.828} &    \underline{\textbf{1.377}} \\
0.6 &        \underline{1.366} &        \underline{\textbf{0.957}} &   \underline{1.758} &    \underline{1.031} \\
{0.8} &        \underline{0.937} &        \underline{0.778} &         \underline{0.896} &    \underline{\textbf{0.758}} \\
1.0 &        \underline{0.802} &        \underline{\textbf{0.561}} & \underline{0.637} &    \underline{0.613} \\
1.1 &        \underline{0.684} &   \underline{\textbf{0.533}} &   \underline{0.666} &    \underline{0.626} \\
1.2 &        \underline{0.658} &  \underline{\textbf{0.437}} &         0.670 &    \underline{0.568} \\
1.4 &        0.573 &        \underline{\textbf{0.420}} &         \underline{0.478} &    \underline{0.494} \\
\bottomrule
\end{tabular}
}
\subfigure[$D=2^{16}$]{
\begin{tabular}{lrrrr}
\toprule
{$\epsilon$}&  HH$^{c}_{2}$ &  HH$^{c}_{4}$ &  HH$^{c}_{16}$ &   HaarHRR \\
\midrule
{0.2} &        7.701 &        \underline{6.172} &  \underline{7.014} &    \underline{\textbf{5.870}} \\
{0.4} &        \underline{3.266} &     \underline{3.101} &         \underline{3.744} &    \underline{\textbf{2.880}} \\
{0.6} &        2.402 &        \underline{2.176} &  \underline{2.426} &    \underline{\textbf{2.018}} \\
0.8 &        1.663 &        \underline{\textbf{1.503}} &  \underline{1.834} &    \underline{1.511} \\
1.0 &        \underline{1.338} &   \underline{\textbf{1.220}} &  \underline{1.426} &    \underline{1.244} \\
1.1 &        \underline{1.202} &        \underline{\textbf{1.051}} &   \underline{1.259} &    \underline{1.120} \\
1.2 &        \underline{1.080} &  \underline{\textbf{0.978}} &     \underline{1.147} &    \underline{1.054} \\
1.4 &        0.973 &        \underline{\textbf{0.848}} &   \underline{0.981} &    \underline{0.973} \\
\bottomrule
\end{tabular}
}
%\subfigure[$D=2^{18}$]{
%
%\begin{tabular}{lrrrr}
%\toprule
%  HH$_{2}$ &  HH$_{4}$ &  HH$_{16}$ &   HaarHRR \\
%\midrule
%        8.509 &        7.488 &        7.017 &    6.500 \\
%       3.994 &        3.646 &        3.832 &    3.470 \\
%        2.997 &        2.604 &        2.537 &    2.314 \\
%        2.483 &        1.963 &        2.082 &    1.869 \\
%        2.092 &        1.701 &        1.541 &    1.551 \\
%        1.975 &        1.525 &        1.507 &    1.524 \\
%        1.953 &        1.310 &        1.409 &    1.456 \\
%        1.604 &        1.187 &        1.282 &    1.338 \\
% \bottomrule
%\end{tabular}
%}
%
\subfigure[$D=2^{20}$]{
\begin{tabular}{lrrrr}
\toprule
{$\epsilon$}&  HH$^{c}_{2}$ &  HH$^{c}_{4}$ &  HH$^{c}_{16}$ &   HaarHRR \\
\midrule
{0.2} &       \underline{ 8.874} &        \underline{8.255} &        \underline{10.462} &    \underline{\textbf{7.237}} \\
{0.4} &        \underline{4.734} &        \underline{4.395} &         5.754 &    \underline{\textbf{4.271}} \\
{0.6} &        3.788 &        \underline{3.485} &         \underline{4.055} &    \underline{\textbf{3.377}} \\
0.8 &        3.287 &        \textbf{3.094} &         \underline{3.268} &   \underline{3.108} \\
1.0 &        3.022 &        2.848 &         \underline{\textbf{2.826}} &    2.920 \\
{1.1} &        3.053 &        2.756 &         \underline{\textbf{2.727}} &    \textbf{2.727} \\
1.2 &        3.145 &        \textbf{2.627} &         2.914 &    2.754 \\
1.4 &        2.975 &        2.659 &         \underline{\textbf{2.543}} &    2.696 \\
      \bottomrule
\end{tabular}
}
\subfigure[$D=2^{22}$]{
\begin{tabular}{lrrrr}
\toprule
{$\epsilon$}&  HH$^{c}_{2}$ &  HH$^{c}_{4}$  &   HaarHRR \\
\midrule
{0.2} &        \underline{8.620} &        \underline{8.638} &    \underline{\textbf{8.099}} \\
0.4 &        		  \underline{\textbf{4.181}} &        \underline{4.330} &    \underline{4.233} \\
0.6 &        		  \underline{\textbf{2.932}} &        \underline{3.077} &    \underline{3.063} \\
0.8 &        		  \underline{\textbf{2.215}} &        \underline{2.590} &    \underline{2.528} \\
1.0 &        		  \underline{\textbf{1.958}} &        \underline{2.246} &    2.326 \\
1.1 &        		  \underline{\textbf{1.777}} &        2.319 &    2.181 \\
1.2 &        		  \underline{\textbf{1.929}} &        2.174 &    2.205 \\
1.4 &        		  \underline{\textbf{1.613}} &        \underline{1.868} &    2.156 \\
      \bottomrule
\end{tabular}
}
}
\caption{Impact of varying $\epsilon$ on mean squared for prefix queries. These numbers are scaled up by 1000 for presentation. We underline the scores that are smaller than corresponding scores in Figure~\ref{tab:mse_range}. }
\label{tab:mse_prefix}

\end{figure*}

\begin{figure}[t]
\centering
\begin{tabular}{|c|c|c|c|c|}
\hline 
D & $2^{8}$ & $2^{9}$ & $2^{10}$ & $2^{11}$ \\ 
\hline 
Wavelet & 221.62 & 306.31 & 410.29 & 536.32 \\ 
%\hline 
(optimal) HH$^{c}_{16}$ & 79.23 & 164.48 & 185.94 & 213.87 \\ 
HH$^{c}_{2}$ & 220.06 & 305.54 & 409.48 & 535.63 \\ 
\hline 
$\frac{\text{Wavelet}}{ \text{HH}^{c}_{16}}$ & 2.7971 & 1.8622 & 2.20 & 2.5077 \\ 
$\frac{\text{HH}^{c}_{2}}{ \text{HH}^{c}_{16}}$ & 2.777 & 1.8576 & 2.202 & 2.5044 \\ 
\hline 
\end{tabular}

\caption{Table 3 from \protect\cite{QYL:13} comparing the exact average variance incurred in answering all range queries for $\epsilon=1$ in the centralized case.} 
\label{tab:haar_vs_hh}
\end{figure}

\subsection{Impact of varying $B$ and $r$} 

\para{Experiment description.}
In this experiment, we aim to study how much a privately reconstructed
answer for a range query deviates from the ground truth.
 Each query answer is normalized to fall in the range 0 to 1, so we
 expect good results to be much smaller than 1. 
To compare with our theoretical analysis of variance, we measure the accuracy in the form of mean squared error between true and reconstructed range query answers.

\para{Plot description.}
Figure~\ref{fig:ci_b} illustrates the effect of branching factor $B$ 
on accuracy for domains of size $2^{8}$ (small), $2^{16}$  (medium),
and lastly $2^{20}$ and $2^{22}$ (large).
Within each plot with a fixed $D$ and query length $r$, we vary the branching factor
on the $X$ axis.
We plot the flat OUE method as if it were a hierarchical method with
$B=D$, since it effectively has this fan out from the root. 
We treat HaarHRR as if it has $B=2$, since is based on a binary tree
decomposition. 
The $Y$ axis in each plot shows the mean squared error incurred while
answering all queries of length $r$.
As the plots go left to right, the range length increases from 1 to
just less than the whole domain size $D$.  
The top row of plots have $D=2^8$, and the last row of plots have
$D=2^{22}$. 
%The first plot in each row is for the unit sized ranges. The second
%bar shows results roughly for the range size where HaarHRR
%outperforms the flat method.
%The subsequent two ranges are in the multiples of 10 and the last range is a large value close to the domain size.

\para{Observations.}
Our first observation is that
the CI step reliably provides a significant improvement in accuracy in
almost all cases for HH, and never increases the error.
Our theory suggests that the CI step improves the worst case accuracy
by a constant factor, and this is borne out in practice.
%, where reductions in error of up to half are seen.
This improvement is more pronounced at larger intervals and higher
branching factors.
In many cases, especially in the right three columns, TreeOUECI and TreeHRRCI are two to four times more accurate then their inconsistent counter parts.   
Consequently, we put our main focus on methods with consistency
applied in what follows. 

Next, we quickly see evidence that the flat
approach (represented by OUE) is not effective for answering range queries.  
Unsurprisingly, for point queries ($r=1$), flat methods are
competitive.
This is because all methods need to track information on individual
item frequencies, in order to answer short range queries.
The flat approach keeps only this information, and so maximizes the
accuracy here.
Meanwhile, HH methods only use leaf level information to answer point queries, and so we see
better accuracy the shallower the tree is, i.e. the bigger $B$ is. 
However, as soon as the range goes beyond a small fraction of the
domain size, other approaches are preferable.
The second column of plots shows results for relatively short ranges
where the flat method is not the most accurate. 

%The first observation across the board is that the flat method's error
%scales linearly with the range size and it is clearly unsuitable for
%answering longer range queries.
%Nevertheless, flat is indeed the best approach for unit sized and tiny
%ranges.
%Furthermore, we note that error is decreases inversely with the
%branching factor for unit sized queries.
%This is because range queries of length 1 are answered only with either the leaves or with internal nodes (when B=2) at a level before the leaves and the accuracy of reconstruction improves with reduction in height.   

%The second overall observation is that f
For larger domain sizes and queries, our methods outperform the flat
method by a high margin.
For example, the best hierarchical methods for very long queries and
large domains are at least 16 times more accurate than the
flat method.
Recall our discussion of OLH above that emphasised that its
computation cost scales poorly with domain size $D$. 
We show results for TreeOLH and TreeOLHCI for the small domain size
$2^{8}$, but drop them for larger domain sizes, due to this poor
scaling.
We can observe that although the method acheives competitive accuracy,
it is equalled or beaten by other more performant methods, so we are
secure in omitting it. 

As we consider the two tree methods, TreeOUE and TreeHRR, we observe
that they have similar patterns of behavior.
In terms of the branching factor $B$, 
it is difficult to pick a single particular $B$ to minimize the
variance, due to the small relative differences.
The error seems to decrease from $B=2$, and increase for larger $B$
values above $2^4$ (i.e. 16). 
Across these experiments, we observe that choosing $B = 4$, 8 or 16
consistently provides the best results for medium to large sized
ranges.
This agrees with our theory, which led us to favor $B=8$ or $B=4$,
with or without consistency applied respectively. 
This range of choices means that  we are not penalized severely for failing to choose an optimal value of $B$.

%We also observe for non-unit length methods, error for hierarchical methods initially decreases with branching factor and then increases.
The main takeaway from Figure~\ref{fig:ci_b} is the strong performance
for the HaarHRR method.  
It is not competitive for point queries ($r=1$), but for all ranges
except the shortest it achieves the single best or equal best
accuracy.
For some of the long range queries covering the almost the entire
domain, it is slightly outperformed by consistent HH$_B$ methods.
However, this is sufficiently small that it is hard to observe
visually on the plots. 
Across a broad range of query lengths (roughly, 0.1\% to 10\% of the
domain size), HaarHRR is preferred.
It is most clearly the preferred method for smaller domain sizes, such
as in the case of $D=2^8$.
We observed a similar behavior for domains as small as $2^{5}$.

%It starts as the one of the worst methods for answering unit lengthed queries but rapidly catches up with other methods and becomes the single best method followed by consistent HH$_B$ approach with B= 2,4 or 8 for answering tiny to small sized queries  on large D's. For large queries, . However, overall HaarHRR's performance is comparable to TreeOUECI and TreeHRRCI's performance when $B=2$. 
%More interestingly, it becomes the most accurate method for answering queries of all lengths when $D=2^{8}$.  %We stress once again that in the centralized case, DHT based approach was proved to be significantly inferior to HH approach.

\subsection{Impact of privacy parameter $\epsilon$}

\para{Experiment description.}
We now vary $\epsilon$ between 0.1 (higher privacy) to 1.4 (lower
privacy) and find the mean squared error over range queries.
Similar ranges of $\epsilon$ parameters are used in prior works such
as \cite{calm:2018}.
After the initial exploration documented in the previous section, our
goal now is to focus in on the most accurate and scalable hierarchical methods.
Therefore, we omit all flat methods and consider only those values of
$B$ that provided satisfactory accuracy.
We choose TreeOUECI as our mechanism to instantiate HH (henceforth
denoted by HH$^{c}_B$, where the $c$ denotes that consistency is applied)
method due to its accuracy. 
We do note that a deployment may prefer TreeHRRCI over TreeOUECI since
it requires vastly reduced communication for each user at the cost of
only a slight increase in error.

\para{Plot description.}
Table~\ref{tab:mse_range} compares the mean squared error for
HH$^{c}_2$, HH$^{c}_4$ HH$^{c}_{16}$ and HaarHRR for various
$\epsilon$ values.
We multiply all results by a factor of 1000 for convenience, so the
typical values are around $10^{-3}$
%--- hence, the typical
%root-mean-squared error is $10^{-6}$ in magnitude,
corresponding to very low absolute error.
%an error of tens out of a population of tens of millions.  
In each row, we mark in bold the lowest observed variance, noting that
in many cases, the ``runner-up'' is very close behind. 
%We have highlighted $\epsilon$ values where HaarHRR is the most accurate method.

\para{Observations.}
The first observation, consistent with Figure~\ref{fig:ci_b}, is that
for lower $\epsilon$'s, HaarHRR is more accurate than the best of
HH$^{c}_B$ methods.
This improvement is most pronounced for $D=2^{8}$ i.e. at most 10\%
(at $\epsilon=0.2$) and marginal (0.01 to 1\%) for larger domains.
For larger $\epsilon$ regimes, HH$^{c}_B$ 
outperforms HaarHRR, but only by a small margin of at most 11\%.
For large domains, HH$^{c}_B$ remains the best method.
In general, except for $D=2^{22}$, there is no one value of $B$ that achieves the best
results at all parameters but overall $B=4$ yields slightly more
accurate results for HH$^{c}_B$ for most cases.
Note that this $B$ value is closer to the optimal value of 9 (derived
in Section~\ref{sec:consistency}) than other values.
When $D=2^{22}$, HH$_{2}^{c}$ dominates HH$_{4}^{c}$ but only by a margin of at most 10\%. %This could be because it is infeasible for us to evaluate all possible range queries.

\para{Comparison with DHT and HH based approaches in the centralized case.}
We briefly contrast with the conclusion in the centralized case.
We reproduce some of the results of Qardaji et al.~\cite{QYL:13} in Table~\ref{tab:haar_vs_hh},
comparing variance for the (centralized) wavelet based approach to (centralized) hierarchical histogram approaches with $B=2,16$ with
consistency applied.
These numbers are scaled and not normalized, so can't be directly compared to our results (although, we know that the error should be much lower in the
centralized case).
However, we can meaningfully compare the ratio of variances, which we show
in the last two rows of the table. 
\eat{
compare ~\cite{dht:11}'s haar
wavelet approach with HH$^{c}_B$ and other methods based on the exact
average error incurred in answering all queries.
We reproduce relevant rows from this table in
table~\ref{tab:haar_vs_hh}.}

\par For $\epsilon=1,D=2^{8}$, the error for the Haar method is
approximately $2.8$ times more than the hierarchical approach. 
Meanwhile, the corresponding readings for HaarHRR and HH$^{c}_4$ (the
most accurate method in the $\epsilon=1$ row) in Table~\ref{tab:mse_range} are
0.787 and 0.763 --- a deviation of only $\approx$3\%. Another
important distinction from the centralized case is that we are not
penalized a lot for choosing a sub-optimal branching factor. 
%(provided we don't choose a very large $B$).
Whereas, 
we see in the 4th row that choosing $B=2$ increases the error of consistent HH method by at least 1.8576 times from the preferred method HH$^{c}_{16}$.  

A further observation is that (apart for $D=2^{22}$)
across 24 observations, HaarHRR is
never outperformed by \emph{all} values of HH$^{c}_B$ i.e. in no
instance is it the least accurate method.
It trails the best HH$^{c}_B$ method by at most 10\%.
On the other hand, in the centralized case
(Table~\ref{tab:haar_vs_hh}), the variance for the wavelet based approach is at least 1.86 times higher than HH$^{c}_B$.

%it is at worst the 3rd best method for unit sized queries and 2nd best for very large queries.

\eat{
\begin{figure*}
    \includegraphics[width=\linewidth]{figs/d_8_pre.png}
    \includegraphics[width=\linewidth]{figs/d_16_pre.png} 
     \includegraphics[width=\linewidth]{figs/d_18_pre.png}
\caption{Impact of constrained inference on HH framework on prefix queries. We increase $r$ along X axis and Y axis plots the mean squared error incurred in answering all range queries of length $r$.}
\label{fig:dd3}
\end{figure*}
}

\begin{figure*}[t]
  \centering
    \includegraphics[width=0.9\linewidth]{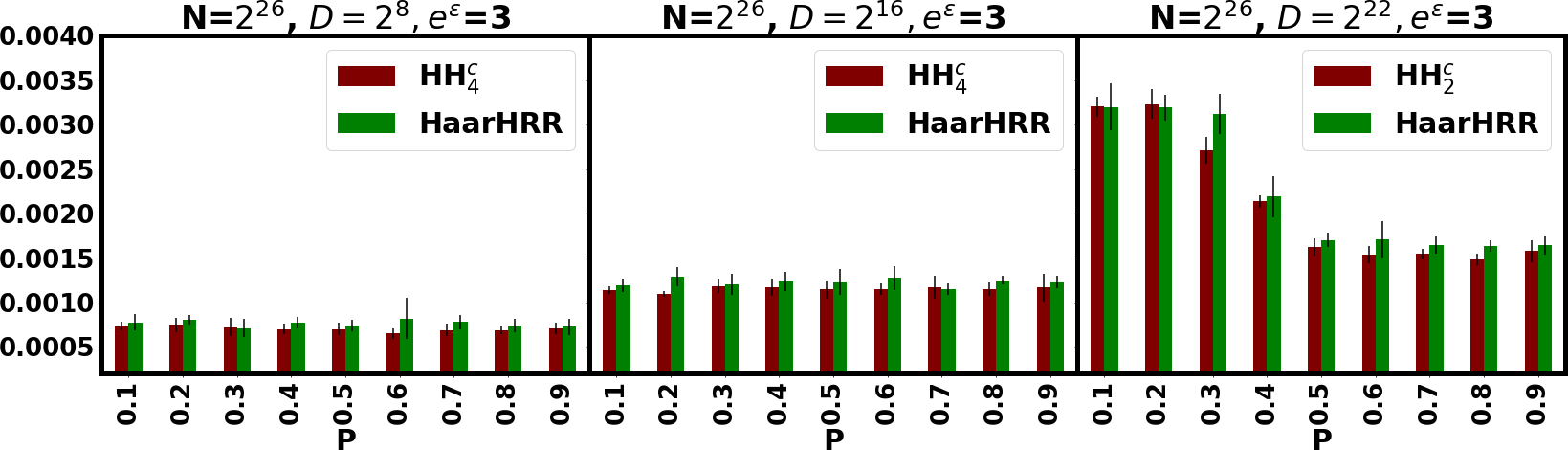}
\caption{Impact of varying the distribution center ($P \times D$) on
  mean squared error for various domain sizes $D$.
  %In each plot, we change $P$ from 0.1 to 0.9 at a fixed $D$.
}
\label{fig:mse_dbn}
\end{figure*}

\subsection{Prefix Queries}

\para{Experiment description.}
As described in Section~\ref{sec:quantiles}, prefix queries deserve special
attention.
Our set up is the same as for range queries.
We evaluate every prefix query, as there are fewer of them. 
%since the upper bound on variance is smaller than that for arbitrary range queries. We intend to confirm this finding empirically in the section. 

\para{Plot description.} Table~\ref{tab:mse_prefix} is the analogue of
Table~\ref{tab:mse_range} for prefix queries, computed with the same
settings.
We underline the scores that are smaller than corresponding scores in Table~\ref{tab:mse_range}. 

\para{Observations.} The first observation is that the error in
Table~\ref{tab:mse_prefix} is often smaller (up to 30\%) than in
Table~\ref{tab:mse_range} at many instances, particularly for small and
medium sized domains.
The reduction is not as sharp as the analysis might suggest,
since that only gives upper bounds on the variance. 
Reductions in error are not as noticeable for larger values of $D$,
although this could be impacted by our range query sampling strategy. 
%We don't observe the reduction in error by a factor close to 0.5 since
%our worst case analysis is not tight.
%The number of such instances is relatively smaller for larger D's due to our inability to evaluate all ${D \choose 2}$ range queries while generating table~\ref{tab:mse_range}.
In terms of which method is preferred, 
HH$^{c}_{2}$ for $D=2^{22}$ and HH$_{4}^{c}$ tend to dominate for
larger $\epsilon$, while  HaarHRR is preferred for smaller $\epsilon$.

\subsection{Impact of input distribution}

\para{Experiment description.} We now check whether the shape of the input distribution affects the mean squared error when other parameters are held to their default values. 

\para{Plot description.}
Figure~\ref{fig:mse_dbn} plots the mean squared error for domains of
different sizes for $e^\epsilon=3$ ($\epsilon=1.1$).
Along the $X$ axis, we shift the center of the Cauchy distribution by changing  $0 <P <1$.
For each domain size $D$, we make
our comparison between HaarHRR and the most accurate consistent HH method according to Table~\ref{tab:mse_range}. 

\para{Observations.}
The chief observation is that for small and mid sized domains, the
change in distribution does not make any noticeable difference in
the accuracy.
HaarHRR continues to be slightly inferior to
HH$_4^{c}$ for all input shapes.
For $D=2^{22}$, we do notice an increase in the error when the bulk of
the mass of the distribution is towards the left end of the domain
($P=0.1$ to $P=0.3$).
This is partly due to the range sampling method we use: the
majority of range queries we test cover only a small amount of the
true probability mass of the distribution for these $P$ values, and
this leads to increased error from privacy noise.
However, the main take-away here should be the consistently small
absolute numbers: maximum mean squared error of 0.0035, i.e. very accurate answers. 

%however we see an interesting declining trend in error
%as we increase $P$.
%However, this peculiarity is observed due to the scale and the manner in which we pick queries. Recall that we execute all $D-l+1$ queries for each left end $l$ from a subset. For left skewed histograms, this translates to executing more queries whose end points are away from the center and the error in reconstruction for counts at off-center locations tend to be naturally higher due to sparsity. A symmetric explanation can be given for higher accuracy observed for right skewed histograms.

\eat{
\section{Applications}
\subsection{Prefix Queries}
(Under construction)
A prefix query is a range query with a fixed left end point. We only consider $[0,b], b < D$. 
\begin{lemma}
Any  $[0,b], b < D$ prefix query can be answered using $\leq h$ haar coefficients.
\label{lemma:prefixcoefs}
\end{lemma}

\subsection{Quantile Queries} Assume there are $N$ items in sorted order, the q-quantile is the item at rank $\lfloor q\times N \rfloor$ for any $q \in [0,1]$. E.g. the zeroth quantile is the first item, $\frac{1}{2}-$quantile is the median. Given  histogram $v \in \mathbb{R}^{D}$ of a discrete distribution, q-th quantile is simply the smallest item $i \in [D]$ at which the cumulative distribution function (cdf) $\sum_{j=0}^{i}v[j]$ crosses the threshold $\lfloor q\times N \rfloor$. One can find the q-quantile efficiently performing binary search of range queries. 

\begin{definition}(Quantile Query Release Problem)
 Given a set of $N$ users, the goal is to collect information guaranteeing $\epsilon$-LDP to approximate any quantile $q \in [0,1]$. Let $\widehat{Q}$ be the estimated cdf of true cdf $Q$ computed using a mechanism $F$. Then the quality of $F$ is measured by the squared error $(\widehat{Q}-Q)^{2}$.
 \end{definition}
 Since the quantile computing problem reduces to the problem of answering prefix/cdf queries, we can use the HH framework to find quantile queries.  
}

\begin{figure}[t]
\centering
  \subfigure[$P=0.1$]{
    \includegraphics[width=0.5 \linewidth]{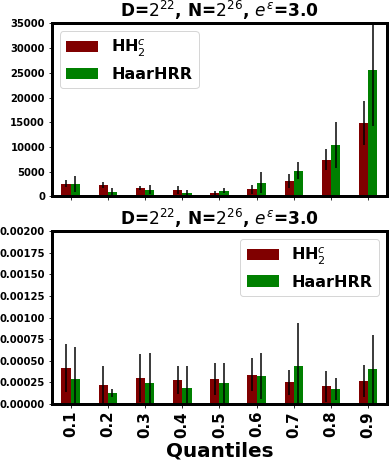}
    }%
    \subfigure[$P=0.5$]{
    \includegraphics[width=0.5 \linewidth]{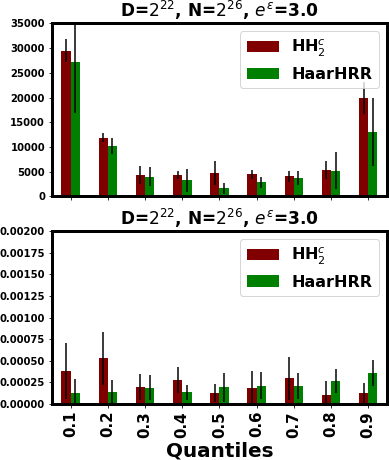} 
    }
    \caption{
Top row: value error; bottom row: quantile error}   
%      The top row plots the absolute difference between true and reconstructed quantile values. The bottom row plots the absolute difference between true and reconstructed answers of the queries that crossed the quantile thresholds.}
\label{fig:quantiles}
\end{figure}

\subsection{Quantile Queries}
\para{Experiment description.}
Finally, we compare the performance of the best hierarchical
approaches in evaluation of the deciles
(i.e. the $\phi$-quantiles for $\phi$ in 0.1 to 0.9)
for a left skewed ($P=0.1$) and centered ($P=0.5$) Cauchy distribution.

\para{Plot description.}
The top row in Figure~\ref{fig:quantiles} plots the actual difference
between true and reconstructed quantile values (value error).
The corresponding bottom plots measure the absolute difference between
the quantile value of the returned value and the target quantile
(quantile error). 
%the true and reconstructed answers for respective prefix queries that
%cross the quantile thresholds (quantile error). 

\para{Observations.}
The first observation is that the both the algorithms have low
absolute value error (the top row).
For the domain of $2^{22} \approx 4M$, 
even the largest error of $\approx$35K made by HH$_{2}^{c}$ is still
very small, and less than 1\%. 
The value error tends to be highest where the data is least dense:
towards the right end when the data skews left ($P=0.1$), and at
either extreme when the data is centered. 
Importantly, the corresponding quantile error is mostly flat.
This means that instead of finding the median (say), our methods
return a value that corresponds to the 0.5004 quantile, which is very
close in the distributional sense.  
This reassures us that any spikes in the value error are mostly a
function of sparse data, rather than problems with the methods. 

%For both input shapes, we observe that the error initially decreases
%and then increases for larger quantiles.
%Interestingly, a similar systematic pattern or any correlation between
%quantile values and error in corrensponding prefix queries do not
%emerge in the bottom plots. In fact, it is hard to find a trend in the
%bottom plots.
%Therefore, this non-monotonous behavior is a quirk of the input
%shapes and not caused by our algorithms. It is likely that the
%quantiles where errors are higher are evaluated by prefix queries
%that involved many off-center nodes with low frequencies whose
%reconstruction is naturally more inaccurate.
%We observed a similar shape in the bars for flat method (not shown for brevity) that simply add up the relevant cell counts. 

\subsection{Experimental Summary}

We can draw a number of conclusions and recommendations from this
study:

%\begin{itemize}[noitemsep]
%\itemsep0em 
%\item
\para{$\bullet$}
The flat methods are never competitive, except for very short ranges and
small domains.

%\item

%\item
\para{$\bullet$}
  The wavelet approach is preferred for small values of $\epsilon$ 
  (roughly $\epsilon < 0.8$), while the (consistent) HH approach is preferred
  for larger $\epsilon$'s and for larger queries.
%\item

  \para{$\bullet$}
This threshold is slightly reduced for larger domains.
  However, the ``regret'' for choosing a ``wrong'' method is low:
  the difference between the best method and its competitor from HH and
  wavelet is typically no more than 10-\%. 
 % \end{itemize}

\para{$\bullet$}
Overall, the wavelet approach (HaarHRR) is always a good compromise
method. 
It provides accuracy comparable to consistent HH in all settings, and
requires a constant factor less space ($D$ wavelet coefficients
against $2D-1$ for HH$_2$). 
%However, the HRR primitive is preferable OUE due to its much reduced communication cost. The consistent HH framework implemented with HRR provides accuracy is slightly inferior to OUE but comparable HaarHRR.
%For large D's however, allocating the hierarchy of $2D-1$ nodes for HH
%framework can be infeasible at aggregator's end due to memory
%constraints in many usecases.
%Therefore the wavelet approach is the best general purpose method.   

%\begin{figure}
%
%\begin{tabular}{
% l % left aligned column
% l % left aligned column
% *{3}{S[table-format=2.0]} % three columns with numeric data       
%}
%\toprule
% \textbf{D} & $\epsilon$ & \textbf{Small r} & \textbf{medium r}& \textbf{Large r} \\
%\midrule
%$D\leq 2^{12}$  & small  &  HaarHRR  & HaarHRR & HaarHRR  \\
%      & medium  &   &   & \\
%      & large   &   &  & \\
%      
%$2^{12} \leq D \leq 2^{18}$  & small &   &  & \\
%      & medium     &   & & \\
%      & large   &   &  & \\
%$D \geq 2^{18}$  & small &   &  & \\
%      & medium  &   &   &  \\
%      & large   &   &  & HH$_{2}^{c}$  \\
%\bottomrule
%\end{tabular}
%\caption{Rough guidelines for choosing a method}
%\end{figure}

%%%%%%%%%%%%%%%%%%%%%%%%%%%%%%%%%%%%%%%%%%%%%%%%%%%%%%%

\section{Concluding Remarks}
\label{sec:conclusions}
\label{sec:extensions}

We have seen that we can accurately answer range queries under the
model of local differential privacy.
Two methods whose counterparts have quite differing behavior in the
centralized setting are very similar under the local setting, in line
with our theoretical analysis.
Now that we have reliable primitives for range queries and quantiles,
it is natural to consider how to extend and apply them further.

\para{Multidimensional range queries.}
Both the hierarchical and wavelet approaches can be extended to
multiple dimensions.
Consider applying the hierarchical decomposition to two-dimensional
data, drawn from the domain $[D]^2$.
Now any (rectangular) range can be decomposed into
$\log_B^2 D$ $B$-adic rectangles (where each side is drawn from a
$B$-adic decomposition), and so we can bound the variance in terms of 
$\log_B^4 D$.
More generally, we achieve variance depending on $\log^{2d} D$ for
$d$-dimensional data.
Similar bounds apply for generalizations of wavelets.
These give reasonable bounds for small values of $d$ (say, 2 or 3). 
For higher dimensions, we anticipate that coarser gridding approaches
would be preferred, in line with~\cite{Qardaji:Yang:Li:12}. 

\para{Advanced data analysis.}
In the abstract, many tasks in data modeling and prediction can be
understood as building a description of observed data density.
For example, many (binary) classification problems can be described as
trying to predict what class is most prevalent
in the neighborhood of a given query point. 
Viewed through this lens, range queries form a primitive that can be
used to build such model.
As a simple example, consider building a Naive Bayes classifier for a
public class based on private numerical attributes.
If we use our methods to allow range queries to be evaluated on each
attribute for each class, we can then build models for
the prediction problem.
Generalizations of this approach to more complex models, different
mixes of public and private attributes, and different questions,
give a  set of open problems for this area. 

%\section{Omitted Proofs}

\eat{
\begin{proof}[ of Lemma~\ref{lemma:prefixcoefs}]
 From the left end (i.e. 0), $h- \lceil \log_2 (r) \rceil$ have non-zero contributions in the answer. These nodes to be included are the ones in the path from the lowest ancestor $v$ of 0 and b to the root of the tree. Similarly, the  nodes that participate in the evaluation from b's end are the ones in the path from $b$ to $v$. There are $\lceil \log_2 (r) \rceil$ of these. Therefore, at most total $h-\lceil \log_2 (r) \rceil + \lceil \log_2 (r) \rceil=h$ nodes are involved in the answer.
\end{proof}
}

\para{Acknowledgements.}
This work is supported in part by
AT\&T, European Research Council grant ERC-2014-CoG 647557, 
and The Alan Turing Institute under the EPSRC grant EP/N510129/1.

\eat{
\subsection{GRR with two probabilities} (Under construction)
The matrix for 1bit RR matrix is 
 \begin{align*} 
\begin{bmatrix}
    p & 1-p \\
1-p & p & \\
\end{bmatrix}  
 \end{align*}
 It's row and column stochastic both. Ninghui's matrix is 
  \begin{align*} 
\begin{bmatrix}
    p & q \\
1-p & 1-q & \\
\end{bmatrix}  
 \end{align*}
 It is  only row stochastic which is what is required. Inspired from this, let's construct a matrix for 3 items GRR. 
 \begin{align*} 
\begin{bmatrix}
    p       & q & \frac{1-p}{2} \\
\frac{1-p}{2} & \frac{1-q}{2} & \frac{1-p}{2} \\
  \frac{1-p}{2} & \frac{1-q}{2} & p \\
\end{bmatrix}  
 \end{align*}
The columns and rows are for item $-1,0,1$.  We retain the symmetry property. We ensure that matrix is full rank i.e. no duplicate rows/columns or uniform distributions in the same column/rows. But it's not clear if this the only/best form.

 Consider our usual data aggregation problem. 
\begin{align*} 
\begin{bmatrix}
    p       & q & \frac{1-p}{2} \\
\frac{1-p}{2} & \frac{1-q}{2} & \frac{1-p}{2} \\
  \frac{1-p}{2} & \frac{1-q}{2} & p \\
\end{bmatrix}  \times
\begin{bmatrix}
    f_{-1}   \\
f_{0} \\
  f_{1} \\
\end{bmatrix} = \begin{bmatrix}
    O_{-1}   \\
O_{0} \\
  O_{1} \\
\end{bmatrix}
 \end{align*}
$f_i,O_i \in \{-1,0,1\}$ are the true and observed frequencies of $\{-1,0,1\}$.

\begin{align*}
\E[O_0] &= f_{-1} (\frac{1-p}{2}) + (\frac{1-q}{2}) f_{0} + (\frac{1-p}{2})f_1 
\\ &= (f_{-1}+f_1)(\frac{1-p}{2}) + (\frac{1-q}{2}) f_{0} 
\\ &=  (1-f_{0})(\frac{1-p}{2}) + (\frac{1-q}{2}) f_{0}
\end{align*}
Rearranging we get $\widehat{f}_{0}=\frac{2\E[O_0]-1+p}{p-q}.$

\begin{align*}
& \var[\E[\widehat{f}_{0}]] = \var\Big[\frac{2\E[O_0]-1+p}{p-q} \Big]= \var \Big[\frac{2\E[O_{0}]}{p-q}\Big] = \frac{4\var[\E[O_0]]}{(p-q)^{2}}\\ &= \frac{4[(1-f_0)(\frac{1-p}{2})(1-\frac{1-p}{2})) + (\frac{1-q}{2})(1-\frac{1-q}{2})f_0]}{(p-q)^{2}} \\ &= \frac{4[(1-f_0)(\frac{1-p^{2}}{4}) + (\frac{1-q^{2}}{4})f_0]}{(p-q)^{2}}= \frac{[(1-f_0)(1-p^{2})+(1-q^{2})f_0]}{(p-q)^{2}} \\ &= \frac{1-f_0-p^{2}+f_0 p^{2}+f_0-q^{2}f_0}{(p-q)^{2}} = \frac{1-p^{2}+f_0(p^{2}-q^{2})}{(p-q)^{2}}
\end{align*}
Let's find the relationship between $p,q$ and $e^{\epsilon}$. We would like $p$ to as large as possible and $q$ to be as small as possible. Therefore, the worst case DP ratio could be
\begin{align*}
e^{\epsilon}=\frac{\Pr[-1,0 | -1,0]}{\Pr[-1,0|0,1]}=\frac{p(\frac{1-q}{2})}{(\frac{1-p}{2})q} = \frac{p(1-q)}{q(1-p)} 
\end{align*}
Rearranging we get $p=\frac{e^{\epsilon}q}{qe^{\epsilon}-q+1}$. 
When $f_0 \approx 1$ (for deeper levels), $\var[\E[\widehat{f}_0]] \approx \frac{1-q^{2}}{(p-q)^{2}}$.
\begin{align*}
\nabla =\frac{d}{dq}\Big[\frac{1-q^{2}}{(p-q)^{2}}\Big] = 0
\end{align*}
A value that sets $\nabla = 0$ is $q=\frac{\sqrt{4e^{\epsilon}+1}-1}{2e^{\epsilon}}$, $p=\frac{e^{\epsilon}(\frac{\sqrt{4e^{\epsilon}+1}-1}{2e^{\epsilon}})}{e^{\epsilon}(\frac{\sqrt{4e^{\epsilon}+1}-1}{2e^{\epsilon}})-(\frac{\sqrt{4e^{\epsilon}+1}-1}{2e^{\epsilon}})+1}= \frac{e^{\epsilon}\sqrt{4e^{\epsilon}+1}-1}{(e^{\epsilon}-1)(\sqrt{4e^{\epsilon}+1}-1)+2e^{\epsilon}}$

\subsubsection{Another matrix} 
 \begin{align*} 
\begin{bmatrix}
    p       & \frac{1-q}{2} & \frac{1-p}{2} \\
\frac{1-p}{2} & q & \frac{1-p}{2} \\
  \frac{1-p}{2} & \frac{1-q}{2} & p \\
\end{bmatrix}  
 \end{align*}

\begin{align*}
\E[O_0] &= f_{-1} (\frac{1-p}{2}) + q f_{0} + (\frac{1-p}{2})f_1 
\\ &= (f_{-1}+f_1)(\frac{1-p}{2}) + q f_{0} =  (1-f_{0})(\frac{1-p}{2}) + q f_{0}
\end{align*}
Rearranging we get $\widehat{f}_{0}=\frac{2\E[O_0]-1+p}{2q+p-1}.$

We would like $p,q$ to be as large as possible. The worst case ratio is $e^{\epsilon}=\frac{pq}{\frac{1-p}{2}\frac{1-q}{2}}=\frac{4pq}{(1-p)(1-q)}$
Rearranging we get $q=\frac{e^{\epsilon}(1-p)}{4p+e^{\epsilon}(1-p)}$.
\begin{align*}
\var[\E[\widehat{f}_{0}]] &= \var\Big[\frac{2\E[O_0]-1+p}{p-1+2q} \Big]= \var \Big[\frac{2\E[O_{0}]}{p-1+2q}\Big] = \frac{4\var[\E[O_0]]}{(p-1+2q)^{2}}\\ &= \frac{4[(1-f_0)(\frac{1-p}{2}(1-\frac{1-p}{2})) + q(1-q)f_0]}{(p-1+2q)^{2}} \\ &= \frac{4[(1-f_0)(\frac{1-p^{2}}{4})+ (\frac{e^{\epsilon}(1-p)}{4p+e^{\epsilon}(1-p)})(1-(\frac{e^{\epsilon}(1-p)}{4p+e^{\epsilon}(1-p)}))f_0)]}{(p-1+2(\frac{e^{\epsilon}(1-p)}{4p+e^{\epsilon}(1-p)}))^{2}} \\ &=  \frac{4[(1-f_0)(\frac{1-p^{2}}{4})+ (\frac{e^{\epsilon}(1-p)}{4p+e^{\epsilon}(1-p)})(1-(\frac{e^{\epsilon}(1-p)}{4p+e^{\epsilon}(1-p)}))f_0)]}{(\frac{(1-p)(e^{\epsilon}(1+p)-4p)}{4p+e^{\epsilon}(1-p)})^{2}} \\ &\leq \frac{ 4pe^{\epsilon}(1-p)}{[(1-p)(e^{\epsilon}(1+p)-4p)]^{2}} = \frac{4pe^{\epsilon}}{(1-p)(e^{\epsilon}(1+p)-4p)^{2}}
\end{align*}
Ideally we want to have something similar to....

 \begin{align*} 
\begin{bmatrix}
    \frac{1}{2}       & \frac{2}{e^{\epsilon}+4} & \frac{1}{4} \\
\frac{1}{4} & \frac{e^{\epsilon}}{e^{\epsilon}+4} & \frac{1}{4} \\
  \frac{1}{4} & \frac{2}{e^{\epsilon}+4} & \frac{1}{2} \\
\end{bmatrix}  
 \end{align*}
}
 % TODO: replace with your brilliant paper!

\bibliographystyle{ACM-Reference-Format}
%\bibliography{ccs-sample}
\bibliography{papers}

\end{document}